\documentclass[conference,onecolumn]{IEEEtran}
\usepackage{moreverb}

\usepackage{graphics}
\usepackage{graphicx}
\usepackage{epsfig}
\usepackage{epstopdf}
\usepackage{footnote}
\usepackage{amsthm, amsmath, amssymb}
\usepackage{algorithm,algorithmic}
\usepackage[utf8]{inputenc}
\usepackage{placeins}
\usepackage{subfigure}

\def\urltilda{\kern -.15em\lower .7ex\hbox{\~{}}\kern .04em}

\newtheorem{prop}{Proposition}
\begin{document}

\title{A Fair Scheduling Model for Centralized Cognitive Radio Networks}

\author{\IEEEauthorblockN{Didem Gözüpek}
\IEEEauthorblockA{Department of Computer Engineering\\
Gebze Institute of Technology, Kocaeli, Turkey\\
Email: didem.gozupek@gyte.edu.tr}
\and
\IEEEauthorblockN{Fatih Alagöz}
\IEEEauthorblockA{Department of Computer Engineering\\
Bogazici University, Istanbul, Turkey\\
Email: fatih.alagoz@boun.edu.tr}
}

\maketitle
\begin{abstract}
We formulate throughput maximizing, max-min fair, weighted max-min fair, and proportionally fair scheduling problems for cognitive radio networks managed by a centralized cognitive base station. We propose a very general scheduling model accomplishing goals such as making frequency, time slot, and data rate allocation to secondary users with possibly multiple antennas, in a heterogenous multi-channel and multi-user scenario. Moreover, our schedulers ensure that reliable communication between the cognitive base station and secondary users are maintained, no collisions occur among secondary users, and primary users in the service area of the cognitive base station are not disturbed. Two distinctive features of our fair schedulers are that they provide joint temporal and throughput fairness, and take throughput values experienced by secondary users in the recent past, referred to as window size, into account and use this information in the current scheduling decision. We also propose a heuristic algorithm for our fair schedulers and demonstrate through simulations that our proposed heuristic yields very close solutions to the values obtained from the optimization softwares. Furthermore, we make extensive simulations to evaluate our schedulers' performance in terms of both total throughput and fairness for varying number of secondary users, frequencies, antennas, and window size.
\end{abstract}

\begin{IEEEkeywords}
Resource allocation, scheduling, max-min fairness, weighted max-min fairness, proportional fairness, MAC, dynamic spectrum access, cognitive radio networks
\end{IEEEkeywords}

\maketitle


\section{Introduction}
\label{sec:Intro}
Although the licensed spectrum is at a premium, wireless technologies continuously grow and hence intensify the demand
for the radio spectrum. Recent studies reveal that a very small portion of the licensed spectrum is actually utilized.
This inefficient spectrum utilization has fostered research studies that focus on new communication paradigms
referred to as dynamic spectrum access (DSA). Cognitive radio (CR) concept is the technology that can actualize the DSA
concept \cite{Mitola}. The idea of enhancing the spectrum utilization by opportunistically using spatio-temporally
unoccupied portions of the spectrum constitutes the very \emph{raison d'\^{e}tre} of cognitive radio \cite{akyildiz2006next}.

Nodes in a cognitive radio network (CRN) can be classified as primary (licensed) users and secondary
(cognitive, unlicensed) users. A primary user (PU) is the licensed owner of a particular spectrum band and has
exclusive rights to access it, whereas a secondary user (SU) has the ability to detect portions of the spectrum
temporarily unused by its PU and use them opportunistically while ensuring that PUs are oblivious to the SUs. In the rest
of this paper, we use the terms \textit{secondary user} and \textit{cognitive user} interchangeably.

An infrastructure-based (centralized) CRN consists of a cognitive base station (CBS) and SUs that are managed by the
CBS. Scheduling methods in traditional infrastructure-based wireless networks (e.g. cellular networks) cannot be
implemented in CRNs due to the unique features and challenges introduced by the DSA concept. Altering channel
availability owing to PU activity and the requirement that PUs have to be oblivious to SUs make the required
scheduling mechanisms in CRNs different from conventional wireless networks. The fact that PUs need to be unaware of the SUs imply that PUs should not be modified. Hence, designing scheduling mechanisms for CRNs in line with this requirement is an open issue that researchers need to address.

In this work, we propose four scheduling
mechanisms for centralized cognitive radio networks, namely throughput maximizing, max-min fair, weighted max-min fair, and
proportionally fair schedulers. Common features of our proposed schedulers are that they guarantee that the ongoing
communication of the PUs in the service area of the CBS is not disrupted, reliable communication between the CBS and
the SUs is maintained, and no collisions occur among SUs. The objective of the throughput maximizing scheduler is
to make frequency, time slot, and data rate allocation to SUs while maximizing the total throughput of all SUs in the service area of the CBS. Maximizing the total throughput may favor some SUs with persistently good channel
conditions and may hence lead to the starvation of some other SUs. Therefore, providing fairness among SUs may be
important in practice. To this end, we propose three fair scheduling mechanisms, namely max-min fair, weighted max min
fair, and proportionally fair schedulers. Max-min fair scheduler aims to maximize the throughput of the SU that has the
minimum throughput among all SUs. In weighted max-min fair scheduler, on the other hand, each SU has a target weight
symbolizing its target ratio of the total throughput, where target weights of all SUs sum up to $1$. The goal of
the weighted max-min fair scheduler is to make the resulting throughput ratio of every SU as close to its target weight
as possible. On the one hand, max-min fair scheduler may lead to poor total throughput values especially when there is
an SU with bad channel conditions. On the other hand, throughput maximizing scheduler behaves opportunistically by not
taking into account the individual throughput values. Proportionally fair scheduler aims to provide a tradeoff between
user satisfaction and system revenue by maximizing the product of the SU throughput values. System revenue implies the total utility (benefit) that the system receives, which corresponds to the total throughput of all SUs in our case. A distinctive feature of
our fair schedulers is that they take the throughput values of the SUs in the recent past into account. This way, the
scheduler can compensate for the temporary throughput losses of some SUs in the subsequent scheduling periods. Our
simulation results suggest that this feature also provides the scheduler with the flexibility to increase the total
cell throughput by sacrificing from weighted max-min fairness through increasing the deviation from the target
weights. Moreover, we propose a greedy heuristic algorithm for (weighted) max-min fair and proportionally fair schedulers.
We demonstrate through simulations that our proposed heuristic yields close performance to the results obtained from the
optimization software CPLEX.

Works in \cite{gozupek2009jcn} and \cite{gozupek2011genetic} rely on the interference temperature (IT) model
proposed by Federal Communications Commission (FCC) in \cite{fcc2003docket}. Because the IT model requires the
measurement of interference temperature at PUs and setting of an upper interference limit on the entire frequency
band, it spurred a lot of debate since its inception and received both positive and negative comments. Most of the
negative comments were due to the complexity of its practical implementation at the physical layer. Finally, FCC
abandoned the IT concept \cite{order2007fcc}. In this work, we distance ourselves from the IT debate and rely on a much
simpler physical layer model. Instead of measuring the interference temperature at all the measurement points (PUs) and
setting an upper limit for each frequency band, the CBS in our model only needs to determine whether PUs are
actively using a particular frequency or not. There is a maximum tolerable interference power for each active PU as
opposed to each frequency band in the IT model. This can be accomplished using conventional physical and MAC layer
spectrum sensing mechanisms in the CRN literature \cite{quan2008optimal, jiang2009optimal}.

The major merit of our proposed scheduling model is that it is a very general model accomplishing many tasks at
the same time. Specifically, all schedulers proposed in this paper achieve the following:
\begin{enumerate}
\item Frequency allocation
\item Time slot allocation
\item Data rate allocation
\item Taking into account possibly multiple antennas for data transmission
\item Multi-user environment
\item Multi-channel environment
\item Considering channel heterogeneity, where not only the availability of the channels (frequencies), but
also the information about ``how much available a particular frequency is for a particular SU in terms of maximum
allowed transmission power and data rate" differ for each SU and channel (frequency) pair
\item Guaranteeing reliable communication of SUs with the CBS
\item Ensuring that PUs are not disturbed by SU transmissions
\item Ensuring that no collisions occur among SUs
\item A temporal notion of fairness together with throughput fairness (in fair schedulers)
\item Taking into account the recently experienced throughput values during the current scheduling decision and thereby providing flexibility to increase throughput by sacrificing from fairness (in fair schedulers)
\end{enumerate}

To the best of our knowledge, none of the previous work in the literature encompasses all of the above features.

The remainder of this paper is organized as follows: Section \ref{sec:Related Work} describes the related work, whereas
Section \ref{sec:ProblemForm} introduces our problem formulations and proposed solutions for throughput maximizing,
max-min fair, weighted max-min fair, and proportionally fair schedulers. Section \ref{sec:SimResults} discusses the
numerical evaluation and simulation results, while Section \ref{sec:Conc} concludes the paper.

\section{Related Work}\label{sec:Related Work}
Opportunistic scheduling exploits the time-varying channel conditions in wireless networks to increase the overall
performance of the system. It has received a lot of attention in the general wireless networking domain
\cite{liu2001opportunistic, viswanath2002opportunistic, liu2003framework}. A scheme designed only to maximize the
overall throughput can be unfairly biased, especially when there are users with persistently bad channel conditions.
Therefore, maintaining some notion of fairness, such as max-min and proportional fairness, is a vital criterion that
opportunistic scheduling algorithms should address.

There are various efforts in the literature that addressed fairness in the general domain of wireless networks \cite{ye2011ijcs, hou2011ijcs, yang2012study}. Authors in \cite{huang2001max} proposed a set of algorithms for the assignment of max-min fair shares to nodes in a
wireless ad hoc network. The work in \cite{liu2003opportunistic} proposed schedulers that provide deterministic and
probabilistic fairness by decoupling throughput optimization and fairness guarantees as two distinct blocks. Authors in \cite{nandagopal2000achieving} proposed a fair framework for ad hoc wireless networks via a contention
resolution algorithm. However, none of these works takes into account the unique features of the CRN concept such as
the requirement to ensure PU protection; therefore, they are unsuitable for implementation in CRNs.

The opportunistic nature of the CRN concept lends itself conveniently to the opportunistic scheduling paradigm. In
CRN context, not only the physical channel conditions such as fading and path loss, but also the PU activity is a
determinant of the channel quality of an SU. This way, an opportunistic scheduler provides the ability to
opportunistically utilize from the time-varying PU activity. Likewise, a purely opportunistic scheduler may cause an
SU that persistently has an active PU in its vicinity starve in terms of throughput. Thus, providing fairness again
comes into play here to compensate for the throughput losses of the SUs that have active PUs in their surroundings. In
addition to time, frequency is also a resource that needs to be shared fairly among the SUs in CRNs. Unlike the works
in \cite{liu2001opportunistic, viswanath2002opportunistic, liu2003framework, huang2001max, liu2003opportunistic, nandagopal2000achieving}, our
focus in this paper is on cognitive radio networks. In particular, we formulate a scheduling model
that utilizes the opportunistic nature of the CR paradigm as well as taking fairness notions into account.

Works about scheduling in CRNs can be broadly categorized into two groups: The ones about overlay spectrum
sharing paradigm and the ones about underlay spectrum sharing paradigm. Overlay spectrum sharing treats the
availability of a frequency band as a binary decision; i.e., either the band is available for SU transmission or not, whereas underlay spectrum sharing is based on the idea of ensuring that the interference experienced by a PU is below a tolerable limit. Our scheduling model in this paper falls into the underlay spectrum sharing category.

The work in \cite{tang2008joint}, for instance, concentrated on the spectrum overlay paradigm by modeling the
interference as a multi-channel contention graph (MCCG). Authors in  \cite{lee2010spectrum} proposed a spectrum decision
framework for centralized CRNs by considering application requirements and current spectrum conditions. Unlike our
work, the work in \cite{lee2008joint} focused on inter-cell spectrum sharing and proposed a joint spectrum and power
allocation framework. Both \cite{lee2010spectrum} and \cite{lee2008joint} are based on the spectrum overlay paradigm
since they handle spectrum availabilities as a binary decision, i.e., a spectrum band is either available or not. Our
work is distinct in principle from all works in the literature about overlay spectrum sharing because we take into
account the maximum allowed interference power at PUs.

Most of the scheduling works about underlay spectrum sharing in the literature focused on rate and power allocation
without emphasizing other criteria such as frequency allocation, possibly having multiple antennas, ensuring reliable
communication, and joint temporal/throughput fairness. The work in \cite{nguyen2011effective} proposed a joint
scheduling and power control scheme for centralized CRNs. Authors in \cite{shi2010maximizing} focused on capacity
maximization in multi-hop CRNs. None of the works in \cite{nguyen2011effective, shi2010maximizing} focused on issues
such as fairness or multiple antennas.

Our proposed scheduler in
this paper has resemblance with OFDM systems. In fact, this resemblance is not surprising because OFDMA is seen as a promising candidate technology
for future CRNs \cite{mahmoud2009ofdm}. For instance, authors in \cite{rodrigues2009adaptive} proposed a dynamically configurable framework that combines maximum rate, max-min fairness, and proportional fairness policies to maximize the number of satisfied users, whereas the work in \cite{zhang2008opportunistic} derived opportunistic scheduling policies with utilitarian fairness for OFDM systems. The work in \cite{wang2012resource}  proposes a resource allocation algorithm for multi-user OFDM-based cognitive radio systems with proportional rate constraints. None of these works consider joint temporal and throughput fairness or the case where users potentially have multiple antennas.

The major difference of our work from all these works is that we provide a much
more general and complete scheduling model that achieves frequency, time slot, and data rate allocation,
while at the same time accommodating multiple antennas for data transmission, a heterogenous multiuser and multi-channel
environment, taking into account numerous physical layer information such as fading, path loss, mobility, and
time-varying channels in addition to ensuring that SUs maintain a reliable communication and PUs are not disturbed
by SU transmissions. Moreover, we also have a temporal notion of fairness (by ensuring that each SU is assigned at
least one time slot) in addition to throughput fairness. Concurrently allowing these two different notions of fairness
does not exist in the previous work in the literature
\cite{tang2008joint, nguyen2011effective, shi2010maximizing, peng2006utilization, dianati2007scheduling}.
Furthermore, our schedulers have a windowing mechanism that can be tailored by the CBS at her own discretion to
increase the total throughput by allowing a little deviation from target throughput ratios of the SUs in our weighted
max-min fair scheduler. To the best of our knowledge, none of the previous work in the literature encompasses all of
these features.

\section{Problem Formulations and Proposed Solutions}\label{sec:ProblemForm}

We focus on a time-slotted centralized CRN cell, where the CBS is responsible for the overall coordination of
the SUs. Figure \ref{fig:802_22_NetworkDiagram} illustrates the considered network architecture. The scheduler resides
at the CBS and decides on how many packets and with which frequency each SU will transmit in each time slot. Centralized architecture is useful for effectively managing the spectrum resources in CRNs. Advantages of an infrastructure-based (centralized) architecture for CRNs have been outlined in \cite{derya2012wcmc}.

Figure \ref{fig:OurSchedulingMechanism} demonstrates our cognitive scheduling method. $SU(f^{1})$ and
$PU_{1}(f^{1})$ show that $SU$ and $PU_{1}$ currently use frequency $f^{1}$, while $PU_{2}(f^{2})$ illustrates that
$PU_{2}$ uses frequency $f^{2}$. That is to say, $PU_{1}$ and $PU_{2}$ will be disturbed if any SU transmits using
frequency $f^{1}$ and $f^{2}$, respectively, and the interference power received by each PU is above its maximum
tolerable interference power. We represent the maximum tolerable interference power of $PU_{j}$ for frequency
$f$ by $P_{IF_{max}}^{fj}$, where $IF$ stands for ``interference". In other words, the SU in Figure \ref{fig:OurSchedulingMechanism} does not disrupt
$PU_2$ because their operation frequency is different, while it is possible for $PU_1$ to be disturbed by the SU
since they currently utilize the same frequency $f^{1}$. The goal in this work is to determine the transmission power
and consequently the data rate of every SU for every frequency and time slot such that the PUs whose communication are
active in that particular frequency are not disrupted.

Let us represent by $u_{it}$ the number of packets transmitted by SU $i$ in time slot $t$, by $x_{it}$ the number of
packets in the buffer of SU $i$ at the beginning of time slot $t$, and by $f_{it}$ the frequency used by SU $i$ in time
slot $t$. We denote the locations of SU $i$ and PU $j$ in time slot $t$ by $L_{it}^{SU}$ and $L_{jt}^{PU}$, respectively, and the
location of the CBS by $L_{CBS}$. Furthermore, we denote the fading coefficient of the channel between SU $i$ and the
CBS in time slot $t$ by $h_{it}^{CBS}$ and the fading coefficient of the channel between SU $i$ and PU $j$ in time slot $t$
by $h_{ijt}$. Consequently, the vector of buffer states for a total number of $N$ cognitive nodes is
$\overline{x_{t}}=[x_{1t},x_{2t},\cdots, x_{Nt}]$, the vector of transmitted packets is
$\overline{u_{t}}=[u_{1t},u_{2t},\cdots, u_{Nt}]$, and the vector of frequencies used by SUs is
$\overline{f_{t}}=[f_{1t},f_{2t},\cdots, f_{Nt}]$. Additionally, the vector of SU locations is
$\overline{L_{t}^{SU}}=[L_{1t}^{SU},L_{2t}^{SU},\cdots, L_{Nt}^{SU}]$, and the vector of PU locations is
$\overline{L_{t}^{PU}}=[L_{1t}^{PU},L_{2t}^{PU},\cdots, L_{Mt}^{PU}]$, where $M$ is the total number of PUs in the coverage area of
the CBS. Moreover, the matrix of fading coefficients for the channels between SUs and PUs is symbolized by
$\mathbf{h_{t}^{SU,PU}}=[h_{ijt}]$, which is an $N\times M$ matrix. Similarly, the vector of fading coefficients for
the channels between the SUs and the CBS is $\mathbf{\overline{h_{t}^{SU,CBS}}}=[h_{1t}^{CBS},h_{2t}^{CBS},\cdots,h_{Nt}^{CBS}]$.

FCC has required CR devices to use geolocation in conjunction with database consultation in \cite{order2008fcc}. Hence, CBS can gather the estimated values of $\overline{L_{t}^{SU}}$, $\overline{L_{t}^{PU}}$, and $L_{CBS}$ by using the geolocation database. Authors in \cite{wang2009effects}, for instance, also point out the positive impact of location
awareness in CRNs. Furthermore, as also pointed out by \cite{wang2010resource, digham2008joint}, SUs can estimate channel gains between SUs and PUs ($\mathbf{h_{t}^{SU,PU}}$ in our case) by
employing sensors near all receiving points and make them available at the central controller. Values for
$\mathbf{\overline{h_{t}^{SU,CBS}}}$ can also be estimated in a similar way. The fact that our scheduling model
is designed for a centralized (infrastructure-based) CRN setting facilitates the implementation for the gathering of such physical layer parameters. As a consequence, the scheduler's mapping is
$\gamma(t):[\overline{x_{t}},\overline{L_{t}^{SU}},\overline{L_{t}^{PU}},L_{CBS},\mathbf{h_{t}^{SU,PU}},\mathbf{\overline{h_{t}^{SU,CBS}}}]\rightarrow[\overline{f_{t}}, \overline{u_{t}}]$.

\subsection{Throughput Maximizing Scheduler}

Our goal is to formulate a scheduling problem that maximizes the average total network throughput, while assuring that the PUs in the service area of the CBS are not disrupted, and
reliable communication between SUs and the CBS is maintained. We consider a frame-based scheduling model where each scheduling decision is made for a scheduling period that consists of a certain number of time slots during which the network conditions are assumed to be fairly stable. Our initial step is to find an expression for $U_{if}$, which denotes the maximum number of packets that can be sent by SU $i$ using frequency $f$ in any time slot during the scheduling period.

Let $\mathcal{F}=\{1,2,\cdots,F\}$ denote the set of $F$ frequencies, $\mathcal{N}=\{1,2,\cdots,N\}$
show the set of $N$ SUs, $P_{IF_{max}}^{fj}$ represent the maximum tolerable interference power of PU $j$ for
frequency $f$, and $P_{r_{jt}}^{f}$ symbolize the power received by PU $j$ through frequency $f$ in time slot $t$.
Moreover, $\Phi_{CBS}^{ft}$ denotes the set of PUs that are actively utilizing frequency $f$ in the coverage area of
the CBS in time slot $t$, and $P_{r_{CBS}}^{if_{it}t}$ is the power received by the CBS owing to the possible transmission
of SU $i$ using frequency $f_{it}$ in time slot $t$, where $f_{it}$ is the frequency used by SU $i$ in time slot $t$. Besides, $S$ is the packet size, $B$ is the bandwidth, $\zeta$ is the sum of interference power from the PUs to SU $i$ and noise power, and $T_{s}$ is the time slot length. Interference power from the PUs to the SUs and to the CBS may depend on many factors such as the spectrum occupancy of the PUs as well as the locations. Even when this interference is modeled through different variables, an expression for the maximum transmission rate $U_{if}$ for each SU $i$ and frequency $f$ can be obtained. Once the $U_{if}$ values are obtained, our ILP formulations for our throughput maximizing, (weighted) max-min fair, and proportionally fair schedulers remain the same. Therefore, for notational simplicity, we represent the interference power from the PUs and the noise power by a single variable $\zeta$. On the other hand, units for parameters $S$, $B$, and $T_{s}$ are bits/packet, bits/second, and seconds/time slot, respectively.

In order to ensure that the PUs are not disturbed by the SU transmissions, we have to guarantee that interference power values received by PUs because of SU transmissions
adhere to the tolerable limits. In other words, we need to impose a restriction on the
interference power received by PUs; i.e., we need to have $P_{r_{jt}}^{f} \le P_{IF_{max}}^{fj}$ $\forall j\in\Phi_{CBS}^{ft}$ and $\forall f\in\mathcal{F}$. The primary goal of SUs is to communicate with the CBS. In doing so,
they may increase the interference received by PUs. Therefore, the maximum tolerable interference requirement
of the PUs translates to a maximum transmission power constraint on the SUs because in order to ensure that the
interference perceived by the PUs is within the tolerable limits, SUs need to adjust their transmission powers while
communicating with the CBS. By considering the channel conditions between SUs and PUs, we can impose a maximum
transmission power constraint on SUs. We can represent the maximum
permissible transmission power for each SU $i$ and frequency $f$ in time slot $t$ by
$P_{xmt}^{ift}$. We use free space path loss and fading in modeling the channels between SUs and PUs, as well
as the ones between the CBS and SUs. Free space propagation is a commonly used model in wireless networks since it efficiently encapsulates the effect of distance on the received signal power as an inverse square law. Other propagation models can also be used in our work since only the maximum possible transmission powers for the SUs and consequently the $U_{if}$ values would change and the scheduling problem formulations as well as the heuristic algorithm would stay exactly the same.

We derive the following expression for $P_{xmt}^{ift}$:
\begin{align}
P_{xmt}^{ift} =& \min_{j\in\Phi_{CBS}^{ft}}\dfrac{P_{IF_{max}}^{fj}}{(\dfrac{\lambda_{f}}{4\pi d_{ijt}}\times |h_{ijt}|)^{2}} \label{initDeriv1}
\end{align}
where $P_{IF_{max}}^{fj}$ represents the maximum tolerable interference power of PU $j$ for frequency $f$, $d_{ijt}$ equals the distance between SU $i$ and PU $j$ in time slot $t$, $\lambda_{f}$ is the wavelength of frequency $f$, and $\Phi_{CBS}^{ft}$ symbolizes the set of PUs that are in the coverage area of the CBS and that are carrying out their communication using frequency $f$ in time slot $t$. If there are no PUs in the coverage area of the CBS using frequency $f$ in time slot $t$, we assume that there is one such PU at the boundary of the coverage region of CBS. This way, we ensure that possible PUs in the neighboring cells are not adversely affected by the transmissions in our cell. Moreover, $h_{ijt}$ denotes the fading coefficient of the channel between SU $i$ and PU $j$ in time slot $t$. In particular,
$(\dfrac{\lambda_{f}}{4\pi d_{ijt}})^{2}$ refers to the path loss of the channel between SU $i$ and PU $j$ for frequency $f$ in time slot $t$ as a result of the free space path loss formula. PU $j$ that has the minimum value of $\dfrac{P_{IF_{max}}^{fj}}{(\dfrac{\lambda_{f}}{4\pi d_{ijt}}\times |h_{ijt}|)^{2}}$ is the one that is the most severely affected by data transmission of SU $i$ using frequency $f$ in time slot $t$. Intuitively, this corresponds to a PU which has low interference tolerance and good channel conditions (such as being close in proximity to the SU). If this PU $j$ is guaranteed not to be affected by this SU transmission, other PUs will also not be disturbed. Therefore, maximum transmission power of SU $i$ using frequency $f$ in time slot $t$ is set to $\min_{j\in\Phi_{CBS}^{ft}}\dfrac{P_{IF_{max}}^{fj}}{(\dfrac{\lambda_{f}}{4\pi d_{ijt}}\times |h_{ijt}|)^{2}}$.

A central entity called ``spectrum broker" or ``spectrum policy server" may coordinate the communication among several CBSs \cite{akyildiz2006next, ileri2005demand}. This way, information such as the distance to the PUs in the neighboring CBS cells can be gathered and used in determining the maximum allowed transmission power of the SUs. This way, the CBS scheduler can also ensure that the PUs in the neighboring CBS cells are also not disturbed. This modification only changes the value of the maximum allowed transmission power values for the SUs and hence the $U_{if}$ values, and do not impact any of the scheduling problem formulations or algorithms proposed in this paper. Therefore, without loss of generality, we focus on not disturbing the PUs in the service area of a single CRN cell in this paper.

Let us assume for simplicity, and yet without loss of generality (w.l.o.g), that $S = B \times T_{s}$. Recall that $P_{xmt}^{ift}$ denotes the maximum permissible transmission
power for SU $i$ and frequency $f$ in time slot $t$. We can convert this maximum transmission power constraint to a maximum
data rate constraint on SUs through Shannon's capacity function for Gaussian channels. This conversion ensures that reliable communication between SU $i$ and the CBS is achieved by having the scheduler to choose the number of packets transmitted in each time slot $t$ using some frequency $f$, $U_{ift}$, equal to the Shannon capacity function for a Gaussian channel \cite{cover2006elements}. Recent advances in coding (Turbo codes and LDPC) make it feasible to achieve performance close to Shannon capacity using codes over finite block lengths. Besides, the uncoded systems also follow the exponential relationship between transmission rate and transmit power \cite{rajan2004delay}. All the optimization problems we formulate in this paper take as input the maximum possible transmission rate of each SU for each frequency. If the relationship between transmission rate and power is not exponential, then the calculation of these maximum possible transmission rates will be different; however, our formulated optimization problems in this paper will remain the same. Therefore, even in cases where the relationship between transmission rate and power is not exponential, all of our optimization problems are still valid.

Therefore, the expressions from \eqref{initDeriv2} to
\eqref{initDeriv4} in the following hold:
\begin{align}
G_{ift}^{CBS} \triangleq &  \dfrac{\lambda_{f}}{4 \pi d_{it}^{CBS} \times h_{it}^{CBS}} \label{initDeriv2}\\
P_{r_{CBS}}^{ift} =& P_{xmt}^{ift} \times |G_{ift}^{CBS}|^{2}  \label{initDeriv3}\\
U_{ift} =& \Big \lfloor \ln(1 + ( P_{xmt}^{ift} \times \dfrac{|G_{ift}^{CBS}|^{2}}{\zeta}))\Big \rfloor  \label{initDeriv4}
\end{align}

where $d_{it}^{CBS}$ is the distance between SU $i$ and the CBS in time slot $t$, and $U_{ift}$ is the maximum number of packets that can be transmitted by SU $i$ using frequency $f$ in time slot $t$. Equation \eqref{initDeriv3} relates the power that would be received by the CBS due to the transmission of SU $i$ using frequency $f$ in time slot $t$ ($P_{r_{CBS}}^{ift}$) if SU $i$ uses its maximum permissible transmission power for frequency $f$ and time slot $t$ ($P_{xmt}^{ift}$). Finally, equation \eqref{initDeriv4} is due to
Shannon's capacity function for Gaussian channels, where $P_{r_{CBS}}^{ift}$ is replaced by \eqref{initDeriv3}. Recall that we have assumed that $S=B\times T_{s}$. The floor operation $\lfloor . \rfloor$ in \eqref{initDeriv4} is necessary since $U_{ift}$ can naturally only take integer values.

Notice that we consider both the interference from SUs to PUs and the interference from PUs to SUs. Recall that the variable $\zeta$ models the interference from PUs to SUs, whereas \eqref{initDeriv1} considers the interference from SUs to PUs and determines the values for $P_{xmt}^{ift}$ such that the interference imposed on the PUs is not above the tolerable limit. The reason is that the goal of the scheduler is to govern the transmission of SUs to the CBS without causing any harmful interference to PUs (without disturbing PUs).

Assume that the network conditions; i.e., PU spectrum occupancies, PU and SU locations, and all the channel
fading coefficients, are small enough not to have any influence on the $U_{ift}$ values for a duration of $T$ time
slots in the considered centralized CRN cell. In other words, assume that the $U_{ift}$ value for SU $i$ and frequency
$f$ remain the same for a duration of $T$ time slots, which is equal to the scheduling period during which the network
conditions are fairly stable. Due to the floor operation  $\lfloor . \rfloor$ in \eqref{initDeriv4}, the scheduling period length $T$ does
not obligate the PU spectrum occupancies as well as the PU and SU locations to remain constant in that time period, but
only requires that the change in their values does not alter $U_{ift}$ (related to SU $i$ and frequency $f$) for a
period of $T$ time slots. The value of $T$, in general, hinges upon the characteristics of the spectral environment. For
instance, a slowly varying spectrum environment like the TV bands used by an IEEE 802.22 network, allows $T$ to have a
fairly large value. Therefore, instead of $U_{ift}$, let us use $U_{if}$ , which represents the maximum number of
packets that can be transmitted by SU $i$ using frequency $f$ in every time slot for a duration of $T$ time slots.

After obtaining the $U_{if}$ values by using the analysis in \eqref{initDeriv1}-\eqref{initDeriv4}, we can formulate our throughput maximizing scheduling problem, which is a binary integer linear programming problem (ILP) that takes the $U_{if}$ values as input variables, as follows:
\begin{align}
\max(\sum_{i=1}^{N}\sum_{f=1}^{F}&\sum_{t=1}^{T}\dfrac{U_{if}X_{ift}}{T})\label{TMSObj}\\
s.&t.\notag\\
\sum_{f=1}^{F}\sum_{t=1}^{T}&X_{ift}\ge 1;\forall i\in\mathcal{N}\label{TMSconstr1}\\
\sum_{i=1}^{N}&X_{ift}\le 1; \forall f\in\mathcal{F}, \forall t\in\mathcal{T}\label{TMSconstr2}\\
\sum_{f=1}^{F}&X_{ift}\le a_{i};\forall i\in\mathcal{N},\forall t\in\mathcal{T}\label{TMSconstr3}
\end{align}

where $\mathcal{T}$ denotes the set of $T$ time slots in a scheduling period; i.e., $\mathcal{T}=\{1,2,\cdots,T\}$. In
addition, $X_{ift}$ is a binary decision variable such that $X_{ift}=1$ if SU $i$ transmits with frequency $f$ in time
slot $t$ and $0$ otherwise, and $a_{i}$ is the number of transceivers (antennas) of SU $i$. In this formulation,
\eqref{TMSconstr1} guarantees that at least one time slot is assigned to every SU and hence provides a temporal notion
of fairness, while constraint \eqref{TMSconstr2} ensures that at most one SU can transmit in a particular time slot and frequency. The reason for having constraint \eqref{TMSconstr2} is twofold. First, having more than one SU transmit their data to the CBS using the same frequency in the same time slot leads to collision at the CBS, which is the receiver side, because the CBS cannot differentiate between the signals sent by different SUs if they are in the same frequency and time slot. Therefore, constraint \eqref{TMSconstr2} is a fundamental necessity to prevent collisions among the SUs. Second, constraint \eqref{TMSconstr2} also ensures that the maximum tolerable interference thresholds of the PUs are met. Consider a situation where two SUs transmit with a certain frequency $f$ in a particular time slot $t$. This implies that two SUs will contribute to the value of $P_{r_{jt}}^{f}$ in $P_{r_{jt}}^{f} \le P_{IF_{max}}^{fj}$. However, when we calculate the maximum transmission power for an SU by considering the interference that can occur at PUs, we consider the interference created by only that SU. Therefore, having more than one SU transmit in the same frequency and time slot may increase the aggregate interference experienced by PUs above the maximum tolerable interference limit, which is $P_{IF_{max}}^{fj}$ . Hence, besides avoiding collisions among SUs, \eqref{TMSconstr2} serves the purpose of guaranteeing that the aggregate interference at the PUs is within the tolerable threshold. Moreover, \eqref{TMSconstr3} represents the fact that an SU $i$ cannot transmit at the same time
using frequencies more than the number of its transceivers (antennas), $a_{i}$, because each transceiver can tune to at
most one frequency at a time. We assume for simplicity
and yet w.l.o.g. that channel conditions of each antenna of a particular SU are the same for a particular frequency.
We make this assumption to isolate us from the possible impacts of different channel conditions for different
antennas and concentrate on the performance impact of the number of antennas. Note here that our formulation does not
mandate the SUs to have multiple antennas; in other words, even when each SU has a single antenna, our formulation in
\eqref{TMSObj}-\eqref{TMSconstr3} is still valid since $a_{i} = 1$ $\forall i\in\mathcal{N}$. Having multiple antennas for data transmission enables the SUs to transmit simultaneously using different frequencies.

After the scheduling decisions about $U_{if}$ and $X_{ift}$ values are made, an SU $i$ for which $X_{ift} = 1$ transmits
$\min(x_{it},U_{if})$ number of packets using frequency $f$ in time slot $t$. In the simulations part of this work, we consider traffic in which all flows
are continuously backlogged so that the resulting throughput is completely related to the scheduling process and
channel conditions without any variation because of the traffic fluctuation. That is to say, in the simulations part
of this work, it is always true that $x_{it}\ge U_{if}$, $\forall i\in\mathcal{N}$, $\forall f\in\mathcal{F}$,
$\forall t\in\mathcal{T}$; i.e., each SU always has sufficient number of packets waiting in its buffer to be transmitted to the CBS. This situation is necessary in order to effectively evaluate the performance of the scheduling process by avoiding the possible influence of the traffic arrival process.

\subsection{Max-Min Fair Scheduler}
Our goal is to formulate a scheduling problem that maximizes the throughput of the SU experiencing the minimum
throughput among all SUs, while ensuring that the communication of none of the PUs is disturbed, and reliable
communication between SUs and CBS is achieved. We firstly implement the analysis in
\eqref{initDeriv1}-\eqref{initDeriv4} and obtain the $U_{if}$ values. Secondly, we define $R_{\varphi}^{i}$, the
aggregate average throughput of SU $i$ in the last $\varphi$ scheduling periods. The unit for $R_{\varphi}^{i}$ is
``packets per time slot". All of the  $R_{\varphi}^{i}$ values are initialized to $0$ for all SUs. Let us define
$\overline{R_{\varphi}^{i}}$, which is based on an exponentially weighted low pass filter, as follows:

\begin{align}
\overline{R_{\varphi}^{i}}=
(1-\displaystyle\frac{1}{\min(k,\varphi)})R_{\varphi}^{i} +
\displaystyle\frac{1}{\min(k,\varphi)}\frac{\displaystyle \sum_{f=1}^{F} \sum_{t=1}^{T} U_{if}X_{ift}}{T} \label{varphiUpdate}
\end{align}

Here, $\dfrac{\displaystyle \sum_{f=1}^{F} \sum_{t=1}^{T} U_{if}X_{ift}}{T}$ denotes the throughput of SU $i$ in the
current scheduling period $k$ and $\dfrac{1}{\min(k,\varphi)}$ is the weight given to it. On the other hand,
$(1-\displaystyle\frac{1}{\min(k,\varphi)})$ is the weight given to the value of $R_{\varphi}^{i}$, i.e. the aggregate
average throughput of SU $i$ at the start of the current scheduling period. Using an exponentially weighted low pass
filter enables our scheduler to give more importance to the throughput experienced in the more recent scheduling
periods than the distant past. At the end of each scheduling period $k$, the value of $R_{\varphi}^{i}$ is updated
as $R_{\varphi}^{i}\leftarrow \overline{R_{\varphi}^{i}}$. Since both $k$ and $\varphi$ are constant in a particular
scheduling execution, w.l.o.g we use $\varphi$ instead of $\min(k,\varphi)$ in the rest of this paper.

Our max-min fair scheduling problem is the following mixed ILP, which is executed by the CBS for each scheduling period consisting of $T$ time slots:
\begin{align}
\max & \hspace{1mm}Z \label{mmfsObj}\\
s.&t.\notag \\
Z\le  (1-\dfrac{1}{\varphi})R_{\varphi}^{i} &+ \dfrac{1}{\varphi}\dfrac{\displaystyle\sum_{f=1}^{F}\sum_{t=1}^{T}U_{if}X_{ift}}{T} \label{mmfsConstr1}\\
\eqref{TMSconstr1}, \eqref{TMSconstr2}, &\text{ and } \eqref{TMSconstr3} \label{mmfsConstr2}
\end{align}
where \eqref{mmfsObj} and \eqref{mmfsConstr1} together maximize $\min_{i \in \mathcal{N}} \overline{R_\varphi^i}$.
Keeping track of and using the aggregate throughput information $R_{\varphi}^{i}$ in
\eqref{mmfsObj}-\eqref{mmfsConstr2} in lieu of maximizing the minimum throughput only in that scheduling period enables
us to provide fairness in a longer time scale. If an SU suffers from low throughput due to PU activity in its
vicinity, it can compensate for this loss in the subsequent scheduling periods due to the historical throughput
information accumulated in $R_{\varphi}^{i}$. Embedding the information about the accrued throughput in the scheduling
algorithm itself is a vital feature of our schedulers. The variable $\varphi$ can be regarded as the window size during
which the changes in the network conditions are considered to be important. If $\varphi$ is too large, the scheduler
will be too responsive to small changes in the network conditions. If $\varphi$ is is too small, the scheduler will be
inflexible in compensating for the temporary fluctuations in the network conditions. For instance, having
$\varphi = 1$ connotes that merely the network conditions in the current scheduling period are taken into consideration
without paying any attention to what has happened in the recent past.

\begin{prop}\label{thmMMFS}
Let $\Omega_{MaxMin}^{OPT}$ be the optimal value for the minimum throughput found by the max-min fair scheduler in
\eqref{mmfsObj}-\eqref{mmfsConstr2} for $\varphi=1$. Let $\Omega_{MaxMin}^{UB}$ be the maximum possible value for
$\Omega_{MaxMin}^{OPT}$ for $\varphi=1$, and $\Omega_{ThrMax}^{OPT}$ be the total throughput found by the throughput maximizing
scheduler in \eqref{TMSObj}-\eqref{TMSconstr3} for $\varphi=1$. Then, $\Omega_{MaxMin}^{OPT} \le \Omega_{MaxMin}^{UB} = \dfrac{\Omega_{ThrMax}^{OPT}}{N}$.
\end{prop}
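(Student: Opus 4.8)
The plan is to reduce everything to the observation that, for $\varphi = 1$, the per-user quantity that the max-min program balances is literally the per-user throughput whose sum is the objective of the throughput maximizing program. First I would substitute $\varphi = 1$ into \eqref{mmfsConstr1}: since $\min(k,\varphi)=1$, the coefficient $1-\frac{1}{\varphi}$ vanishes and the history term $R_{\varphi}^{i}$ drops out, so \eqref{mmfsConstr1} becomes $Z \le \frac{\sum_{f=1}^{F}\sum_{t=1}^{T}U_{if}X_{ift}}{T}$. Writing $r_{i}(X) := \frac{\sum_{f=1}^{F}\sum_{t=1}^{T}U_{if}X_{ift}}{T}$ for the throughput of SU $i$ under a feasible assignment $X$ (feasible meaning $X$ satisfies \eqref{TMSconstr1}--\eqref{TMSconstr3}), the program \eqref{mmfsObj}--\eqref{mmfsConstr2} computes $\Omega_{MaxMin}^{OPT}=\max_{X}\min_{i\in\mathcal{N}} r_{i}(X)$, while \eqref{TMSObj}--\eqref{TMSconstr3} computes $\Omega_{ThrMax}^{OPT}=\max_{X}\sum_{i\in\mathcal{N}} r_{i}(X)$ over the \emph{same} feasible region (the throughput maximizing problem does not depend on $\varphi$ at all).

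The core of the argument is then the elementary fact that a minimum never exceeds an average. For any feasible $X$,
\begin{align}
\min_{i\in\mathcal{N}} r_{i}(X) \le \frac{1}{N}\sum_{i\in\mathcal{N}} r_{i}(X) \le \frac{1}{N}\,\Omega_{ThrMax}^{OPT}, \notag
\end{align}
where the second inequality holds because $\sum_{i} r_{i}(X)$ is exactly the throughput maximizing objective, whose maximum over the common feasible region is $\Omega_{ThrMax}^{OPT}$. Taking the maximum of the left-hand side over all feasible $X$ yields $\Omega_{MaxMin}^{OPT}\le\frac{\Omega_{ThrMax}^{OPT}}{N}$, which covers the first (trivial) inequality together with one direction of the claimed equality.

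To finish, I would argue that $\frac{\Omega_{ThrMax}^{OPT}}{N}$ is precisely the \emph{maximum possible} value $\Omega_{MaxMin}^{UB}$ of the minimum throughput, i.e.\ the tightest conceivable ceiling. The inequality $\min\le\text{mean}$ becomes an equality exactly when all the $r_{i}(X)$ coincide; hence the best case for the minimum is an idealized configuration in which the total throughput attains its maximum $\Omega_{ThrMax}^{OPT}$ and is shared perfectly equally among the $N$ users, giving each user $\frac{\Omega_{ThrMax}^{OPT}}{N}$ and therefore a minimum of $\frac{\Omega_{ThrMax}^{OPT}}{N}$. No configuration can push the minimum higher than the average, and the average is capped by $\frac{\Omega_{ThrMax}^{OPT}}{N}$, so this equal-share value is the maximum possible, establishing $\Omega_{MaxMin}^{UB}=\frac{\Omega_{ThrMax}^{OPT}}{N}$.

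The step I expect to be the only real subtlety is the last one: pinning down the meaning of $\Omega_{MaxMin}^{UB}$ as the supremum of the minimum throughput over the \emph{idealized} (perfectly balanced) configurations rather than over the actual combinatorial feasible region. Because the constraints \eqref{TMSconstr1}--\eqref{TMSconstr3} together with the integrality of the products $U_{if}X_{ift}$ generally preclude splitting the optimal total throughput into $N$ equal shares, the ceiling $\frac{\Omega_{ThrMax}^{OPT}}{N}$ need not be attained by a genuine schedule---this is exactly why the statement keeps $\Omega_{MaxMin}^{OPT}\le\Omega_{MaxMin}^{UB}$ as a possibly strict inequality rather than an equality. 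I would therefore be careful to present $\Omega_{MaxMin}^{UB}$ as this best-case equal-share value, so that the equality with $\frac{\Omega_{ThrMax}^{OPT}}{N}$ is exact while $\Omega_{MaxMin}^{OPT}$ remains only bounded above by it.
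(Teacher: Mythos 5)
Your proof is correct and follows essentially the same route as the paper's: the paper bounds the max-min optimum by noting that every SU's throughput is at least the minimum, so the total is at least $N\,\Omega_{MaxMin}^{OPT}$ and at most $\Omega_{ThrMax}^{OPT}$, which is exactly your ``minimum never exceeds the average'' inequality rearranged. Your version is somewhat more careful than the paper's one-line argument, in particular in spelling out the $\varphi=1$ reduction and in pinning down the meaning of $\Omega_{MaxMin}^{UB}$ as the ideal equal-share value, but the underlying argument is the same.
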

\begin{proof}
Since $\Omega_{MaxMin}^{OPT}$ is the throughput of the SU with minimum throughput, the throughput of any other SU
among the remaining $N-1$ SUs is at least $\Omega_{MaxMin}^{OPT}$. Let $\Omega_{MaxMin}^{tot}$ denote the total
throughput resulting from the max-min fair scheduler execution. Then, $\Omega_{MaxMin}^{tot} \ge \Omega_{MaxMin}^{OPT} + (N-1)\Omega_{MaxMin}^{OPT} = N \times \Omega_{MaxMin}^{OPT}$. Since $\Omega_{MaxMin}^{tot} \le \Omega_{ThrMax}^{OPT}$, it follows that $\Omega_{MaxMin}^{OPT} \le \Omega_{MaxMin}^{UB} = \dfrac{\Omega_{ThrMax}^{OPT}}{N}$.
\end{proof}

\subsection{Weighted Max-Min Fair Scheduler}

Max-min fair scheduling problem in \eqref{mmfsObj}-\eqref{mmfsConstr2} maximizes the average throughput of the SU
that has the minimum aggregate throughput; therefore, all SUs operate at a similar throughput level. However, in some
practical cases, CBS operator may want to differentiate between different SUs and provide them with different levels
of service by giving them priorities. We can quantify these priorities by associating a target weight with each SU such that the higher the target weight of
an SU is, the more it is favored by the CBS scheduler in the frequency and time slot allocation.

Our goal is to formulate a scheduling problem that makes the throughput
ratios of all the SUs as close to their target weights as possible, while ensuring that the communication of none of
the PUs is disturbed, and reliable communication between the SUs and the CBS is achieved. Let us denote by $\eta_{i}$, where $0 \le \eta_{i} \le 1$, the target weight of SU $i$; i.e., the target ratio of the throughput of SU $i$ to the
total throughput of all SUs in the CRN cell. Weights of all SUs in the coverage area of the CBS sum up to $1$; that is
to say, $\displaystyle\sum_{i=1}^{N}\eta_{i} = 1$. If it were theoretically possible to assign all SUs their exact
target weights, then the values obtained via dividing their throughput by $\eta_{i}$, which we refer to here by
\emph{normalized throughput values}, would all be equal to each other. Accordingly, the objective function of our weighted max-min fair scheduling problem aims to maximize the normalized throughput value of the SU that has the minimum normalized
throughput value, and thereby makes the throughput ratio of every SU as close to its target weight as possible.
We firstly implement the analysis in \eqref{initDeriv1}-\eqref{initDeriv4} and obtain the $U_{if}$ values. Secondly, we
calculate $R_{\varphi}^{i}$  $\forall i\in\mathcal{N}$, and update them at the end of every scheduling period as in
\eqref{varphiUpdate}. Our weighted max-min fair scheduling problem is the following mixed ILP, which is executed by the CBS for each scheduling period consisting of $T$ time slots:

\begin{align}
\max\hspace{1mm} &Z' \label{weightedMmfsObj}\\
s.&t.\notag \\
Z' \le & \dfrac{(1-\dfrac{1}{\varphi})R_{\varphi}^{i} + \dfrac{1}{\varphi}\dfrac{\displaystyle\sum_{f=1}^{F}\sum_{t=1}^{T}U_{if}X_{ift}}{T}}{\eta_{i}} \label{weightedMmfsConstr1}\\
\eqref{TMSconstr1},& \eqref{TMSconstr2}, \text{ and } \eqref{TMSconstr3} \label{weightedMmfsConstr2}
\end{align}

where the constraint \eqref{weightedMmfsConstr1} specifies $Z'$, which is the normalized aggregate throughput of the
SU that has the minimum normalized aggregate throughput among all SUs. The objective function in
\eqref{weightedMmfsObj} maximizes this minimum normalized aggregate throughput ($Z'$). In other words,
\eqref{weightedMmfsObj} and \eqref{weightedMmfsConstr1} together maximize
$\min_{i\in\mathcal{N}}\dfrac{\overline{R_{\varphi}^{i}}}{\eta_{i}}$. As in the max-min fair
scheduler, having $\varphi = 1$ implies that only the current scheduling period is taken into consideration. Similarly,
keeping track of and using the aggregate throughput information $R_{\varphi}^{i}$ in
\eqref{weightedMmfsObj}-\eqref{weightedMmfsConstr2} in lieu of maximizing the minimum normalized aggregate throughput
merely in that scheduling period enables us to provide fairness in a longer time scale.

Notice here that in our formulations, max-min fair scheduler turns out to be a special case of the weighted max-min
fair scheduler where $\eta_{i} = \dfrac{1}{N}$ $\forall i\in\mathcal{N}$; i.e., all target weights are equal to
each other. This result intuitively makes sense since the goal of the max-min fair scheduler is to essentially make
the throughput of each SU as close to each other as possible.

\subsection{Proportionally Fair Scheduler}

The notion of proportional fairness aims to provide a tradeoff between users' satisfaction and system revenue
\cite{li2008proportional}. A data rate allocation is said to be proportionally fair if for any other feasible rate
allocation, the aggregate of the proportional changes is not positive. Accordingly, proportional fairness is achieved
by maximizing the sum of the logarithms of the data rates, which is equivalent to maximizing the product of the data
rates \cite{nandagopal2000achieving}.

Our goal is to formulate a scheduling problem that maximizes the products the SU
throughput values, while ensuring that the communication of none of the PUs is disturbed, and reliable communication
between SUs and the CBS is achieved. Firstly, as in the max-min fair and weighted max-min fair
schedulers, we implement the analysis in \eqref{initDeriv1}-\eqref{initDeriv4} and obtain the $U_{if}$ values.
Secondly, we calculate $R_{\varphi}^{i}$ $\forall i \in \mathcal{N}$ and update them at the end of every scheduling
period using \eqref{varphiUpdate}. Our proportionally fair scheduler is the following mixed integer program, which is executed by the CBS for each scheduling period consisting of $T$ time slots:

\begin{align}
\max\hspace{1mm} &(\sum_{i=1}^{N}\log((1-\dfrac{1}{\varphi})R_{\varphi}^{i} + \dfrac{1}{\varphi}\dfrac{\displaystyle\sum_{f=1}^{F}\sum_{t=1}^{T}U_{if}X_{ift}}{T})) \label{pfsObj}\\
s.&t.\notag \\
\eqref{TMSconstr1},& \eqref{TMSconstr2}, \text{ and } \eqref{TMSconstr3} \label{pfsConstr}
\end{align}

where the objective function in \eqref{pfsObj} maximizes the sum of logarithms of the aggregate throughput values of
each SU, which would be updated according to \eqref{varphiUpdate} at the end of the considered scheduling period if
those particular $X_{ift}$ values are used. Constraints are the same as in the other schedulers. The integer
program in \eqref{pfsObj}-\eqref{pfsConstr} maximizes a nonlinear concave objective function subject to linear
constraints, where the concavity is due to the logarithm operation in the objective function.

To put it in a nutshell, our proposed scheduling model can be summarized as follows:

\emph{Step 1.} Find the values for $U_{if}$ $\forall i\in\mathcal{N}$, $\forall f\in\mathcal{F}$ by doing the analysis
in \eqref{initDeriv1}-\eqref{initDeriv4}.\\
\indent \emph{Step 2.} Given the values for $N$, $F$, $T$, $\varphi$, $\eta_{i}$, $a_{i}$, and $U_{if}$, find the values for $X_{ift}$ $\forall i\in\mathcal{N}$, $\forall f\in\mathcal{F}$,$\forall t\in\mathcal{T}$ by executing either of the following binary/mixed integer programs:\\
\indent \indent (a) For throughput maximizing scheduler, execute \eqref{TMSObj}-\eqref{TMSconstr3}\\
\indent \indent (b) For max-min fair scheduler, execute \eqref{mmfsObj}-\eqref{mmfsConstr2}\\
\indent \indent (c) For weighted max-min fair scheduler, execute \eqref{weightedMmfsObj}-\eqref{weightedMmfsConstr2}\\
\indent \indent (d) For proportionally fair scheduler, execute \eqref{pfsObj}-\eqref{pfsConstr}

Another strength of our work is that it is not only applicable to CRNs but to other wireless technologies as well. The integer programming formulations in Steps 2(a),(b),(c), and (d) above can be treated separately from Step 1; therefore, they can be applied for instance to OFDMA systems as well. This is not surprising; in fact, OFDMA is seen as a promising candidate technology for future CRNs \cite{mahmoud2009ofdm}.

All of the scheduling problems formulated in this paper are integer programming problems, which may in general be
NP-complete. However, some certain special cases of integer programming problems may be solvable in polynomial time.
For instance, the integer programming formulations for maximum weighted matching and minimum spanning tree problems
are solvable in polynomial time. In a separate study \cite{gozupek2011graph}, we proved that the throughput maximizing
scheduling problem is solvable in polynomial time, whereas the max-min fair, weighted max-min fair, and proportionally
fair scheduling problems are NP-complete in the strong sense. In this paper, we propose a computationally efficient
heuristic algorithm for the fair scheduling problems.

\subsection{Our Proposed Heuristic Algorithm}
In this section, we propose a heuristic algorithm for max-min, weighted max-min and proportionally fair schedulers. Our algorithm is adaptive in the sense that some of its steps behave differently depending on whether max-min fair, weighted max-min fair, or proportionally fair scheduling problem is under consideration.

We outline our proposed heuristic in Algorithm \ref{algHeuristic} (FAIRSCH).

\algsetup{indent=2em}
\newcommand{\factorial}{\ensuremath{\mbox{\sc Factorial}}}
\begin{algorithm}[h!]
\caption{FAIRSCH (Our proposed heuristic algorithm)}\label{algHeuristic}
\begin{algorithmic}[1]
\REQUIRE $\mathcal{N}$, $\mathcal{F}$, $\mathcal{T}$,
$\mathcal{A}_{i}$, $U_{if}$, $\eta_{i}$. \ENSURE $X_{ift}$ values $\forall
i\in\mathcal{N}, \forall f\in\mathcal{F},\forall t\in\mathcal{T}$.
\medskip

\STATE $\Omega_{i} \gets 0$, $\forall i\in\mathcal{N}$

\STATE $availSUs[t] \gets \mathcal{N}$ $\forall i\in\mathcal{N}$, $\forall t\in\mathcal{T}$

\IF {\textit{Weighted Max-Min}}

\STATE $U_{if} \gets \dfrac{U_{if}}{\eta_{i}}$, $\forall i\in\mathcal{N}$, $\forall f \in \mathcal{F}$
\ENDIF

\FOR{$f = 1$ to $F$}

\FOR{$t = 1$ to $T$}

\IF {\textit{Max-Min} \OR \textit{Weighted Max-Min}}

\STATE $i^{*} \gets \displaystyle\arg\min_{\overline{i}}\Omega_{\overline{i}}$, $\forall \overline{i} \in availSUs[t]$

\ELSE \IF {\textit{Prop-Fair}}

\STATE $newObj[\overline{i}] \gets (\Omega_{\overline{i}}+U_{\overline{i}f})\times \displaystyle\prod_{j \in \hspace{1mm}\mathcal{N} - \{\overline{i}\}}\Omega_{j}$, $\forall \overline{i} \in availSUs[t]$

\STATE $i^{*} \gets \displaystyle\arg\max_{\overline{i}}newObj[\overline{i}]$, $\forall \overline{i} \in availSUs[t]$

\ENDIF

\ENDIF

\STATE $X_{i^{*}ft} \gets 1$

\STATE $\Omega_{i^{*}} \gets \Omega_{i^{*}} + \dfrac{U_{i^{*}f}}{T}$

\IF {$\displaystyle\sum_{f=1}^{F}X_{i^{*}ft} = a_{i^{*}}$}

\STATE $availSUs[t] \gets availSUs[t]-\{i^{*}\}$

\ENDIF

\ENDFOR

\ENDFOR
\end{algorithmic}
\end{algorithm}

$\Omega_{i}$ indicates the summation of $U_{if}$ values that have hitherto been assigned to SU $i$ during the execution of
the algorithm. Step $1$ initializes all $\Omega_{i}$ values to $0$. $availSUs[t]$ indicates the list of SUs that have an
available (free) antenna for frequency assignment in time slot $t$. Since antennas of all SUs are available for all time slots at the
beginning of the execution of the algorithm, $availSUs[t]$ values are initialized to $\mathcal{N}$ in Step $2$. Steps
$3$-$5$ update the $U_{if}$ values for the weighted max-min fair scheduler since constraint \eqref{weightedMmfsConstr1} is
equivalent to \eqref{mmfsConstr1} when $U_{if}$ values are scaled to $\dfrac{U_{if}}{\eta_{i}}$. Algorithm \ref{algHeuristic}
assigns each frequency and time slot pair sequentially to an SU. For the max-min and weighted max-min fair schedulers,
Steps $8$-$9$ assign frequency $f$ and time slot $t$ to the SU that has the minimum $\Omega$ value so far among the SUs that have an
available antenna for time slot $t$ (ties are broken arbitrarily). For the proportionally fair scheduler, Steps $11$-$13$ assign frequency $f$ and
time slot $t$ to the SU that gives the maximum value for the product of $\Omega_{i}$ values if frequency $f$ and time slot $t$ are assigned. $newObj[i]$ indicates the
new objective function value, i.e., product of $\Omega_{i}$ values, if frequency $f$ and time slot $t$ are assigned to
SU $\overline{i}$. In the case of proportionally fair scheduling, our heuristic algorithm selects the SU that gives the
maximum value for this new objective function. Step $16$ makes the assignment to the selected SU $i^{*}$. If all antennas
of SU $i^{*}$ are assigned some frequency for time slot $t$, then steps $17$-$18$ remove this SU from the list of
available SUs for time slot $t$ ($availSUs[t]$). The algorithm terminates after all frequency and time slot pairs are assigned to
some SU. Note here that FAIRSCH is a greedy algorithm since it selects the SU that yields the best possible objective
function value in each iteration. In other words, it aims to increase the throughput of the SU with minimum (normalized)
throughput in (weighted) max-min fair scheduling, whereas it aims to maximize the product of the throughput values in
proportionally fair scheduling.

\textbf{Computational Complexity:} In the case of (weighted) max-min fair scheduling, FAIRSCH scans the list of SUs in $availSUs[t]$,
the size of which is at most $N$ during the assignment of frequency $f$ and time slot $t$. Since there are $F$ frequencies and $T$ time slots,
the complexity of FAIRSCH in the case of (weighted) max-min fair scheduling is $O(NFT)$. In the case of proportionally
fair scheduling, $availSUs[t]$ is scanned in the calculation of each value for $\overline{i}$ in Step $12$. Therefore,
the complexity of Step $12$ is $O(N^{2})$. Hence, the complexity of FAIRSCH in the case of proportionally fair scheduling
is $O(N^{2}FT)$.

\section{Simulation Results}\label{sec:SimResults}

We focus on a CRN cell with $600$ meters of radius and simulate it using Java. We elicit the $U_{if}$ values
in each set of simulations for $5000$ scheduling periods. We then solve the optimization problems using CPLEX
\cite{CPLEX} and KNITRO \cite{KNITRO}. We use CPLEX to solve the ILP formulations for throughput maximizing, max-min
fair, and weighted max-min fair schedulers. Since the optimization problem for the proportionally fair scheduler
in \eqref{pfsObj}-\eqref{pfsConstr} has a nonlinear objective function, we cannot solve it via CPLEX; therefore,
we solve it using KNITRO, which is a solver for nonlinear optimization \cite{KNITRO}. KNITRO uses two algorithms
for mixed integer nonlinear programming (MINLP). The first is a nonlinear branch-and-bound method and the second
implements the hybrid Quesada-Grossman method for convex MINLP \cite{quesada1992lp}. Because the objective function
in \eqref{pfsObj} is the maximization of a concave function, KNITRO can provide reliable solutions \cite{KNITRO}. We
compare the performance of our four schedulers by using the same set of $U_{if}$ values in each comparison. Every
scheduling period consists of $T = 10$ time slots and each time slot lasts for $T_s = 100$ milliseconds. According
to the IEEE $802.22$ standard, it is imperative that SUs vacate a spectrum band within two seconds from the appearance
of the licensed owner (PU) of that particular band. Therefore, we take each scheduling period equal to one second
(hence consisting of $10$ time slots) since it is sufficient for proper operation. Additionally, we examine our
methods for additive white gaussian noise (AWGN) channels; in other words, we take
$h_{ijt} = h_{it}^{CBS} = 1$ $\forall i \in \mathcal{N}$, $\forall j \in \Phi_{CBS}^{ft}$ and $\forall t\in\mathcal{T}$.
Furthermore, $\zeta=10^{-6}$, $\forall i,f$ and the maximum tolerable interference power of active PUs is
$P_{IF_{max}}^{fj} = 10$ milliwatts $\forall f\in\mathcal{F}$ and $\forall j\in \Phi_{CBS}^{ft}$.

Initial locations of SUs and PUs in the CRN cell are uniformly randomly distributed. We use
random waypoint mobility model to simulate the movement of SUs and PUs. In line with this model,
each SU/PU selects a target location in the cell in a uniformly random manner and moves towards this point with a
constant velocity. After reaching its target location, each node stays there for a certain amount of time and then
chooses another target point. This movement pattern continues in this way for each SU/PU until the end of simulation.
We set the staying duration between movement periods as $10$ seconds. We represent the velocity of SUs by $V_{s}$
and the velocity of PUs by $V_{p}$.

We model the PU spectrum utilization pattern by the finite state model illustrated in Figure \ref{fig:PUSpectrumOccupancy}. Each PU is either in the ON or in the OFF state. The ON state is comprised of one of the $F$ substates, each of which corresponds to being active using a frequency among a total of $F$ frequencies. The probability of staying in the ON or OFF states is $p_{S}$. The probability of selecting each frequency during switching from OFF to the ON state is equally likely; therefore, the probability of transition from OFF state to any frequency $f\in\mathcal{F}$ equals $(1-p_{S})/F$. In a slowly varying spectral environment, $p_{S}$ value is typically high; consequently, we select $p_{S}$ as $0.9$ in our simulations.

We identify six parameters and set possible low, high, and middle values for these parameters. We outline the names
of these parameters together with their low, middle, and high values in Table \ref{table:paramNames}. We take into
account the case where all SUs have the same velocity, referred to as $V_{s}$, and all PUs have the same
velocity, referred to as $V_{p}$. Moreover, we also focus on the case where all SUs have the same number of antennas;
i.e., $a_{i} = a$ $\forall i \in \mathcal{N}$.

By means of a $2^{6}$ factorial design (using the low and high values), multi-way (n-way) analysis of variation (ANOVA), and regression
techniques, we identified in \cite{Gozupek2010PIMRC} the following parameter pairs as statistically significant: $NM$, $NF$, $NV_p$, $MF$, $FV_p$, and $Fa$. In this paper, we analyze the impact of the interactions $NM$, $MF$, and $Fa$ in more detail. When we examine the influence of a parameter pair, we take the middle values of the rest of the parameters while varying the values of the examined parameter pair.

In both multi-way ANOVA and the detailed experiments, to determine the number of samples (replications, the number of scheduling periods) to run the experiments for, we employ the
following strategy: We initially obtain the $U_{if}$ values for $5000$ scheduling periods. We then take the first
$n+50$ of these samples (we will describe how to obtain the value for $n$ in the sequel), and run the optimization
problems in CPLEX/KNITRO for these sets of $U_{if}$ values. Our reason for this approach is twofold: Firstly because
CPLEX/KNITRO simulations may in some cases take considerable amount of time and hence we do not want to waste our
computational resources, and secondly because we want to ensure that our estimates for the response variables; i.e.,
total throughput for the throughput maximizing scheduler, minimum throughput for the max-min fair scheduler, normalized
minimum throughput for the weighted max-min fair scheduler, and the sum of logarithms of all SU throughput values for
the proportionally fair scheduler, are within a certain interval of the actual mean values with a high confidence level.
In particular, the number of samples we need to take hinges on the variance of the data. If the variance of the data
is small, there is no point in running the experiment with too many samples. In a data set where the variance is known,
the number of samples we need to take in order to ensure that the sample mean of the response variable is within
$\pm E$ of the actual mean with a $100(1 - \alpha)\%$ confidence level is as follows \cite{montgomery2007statistics}:

\begin{align}
n = \Bigg\lceil (\dfrac{z_{\alpha / 2}\sigma_{data}}{E})^{2}\Bigg\rceil \label{sampleSizeFormula}
\end{align}

Here, $z_{\alpha / 2}$ denotes the upper $\dfrac{\alpha}{2}\%$ percentile of the standard normal distribution, $n$
represents the sample size, and $\sigma_{data}$ symbolizes the standard deviation of the data, which corresponds to
the resulting value of the response variable. The ceiling operator in \eqref{sampleSizeFormula} is necessary because
$n$ has to take integer values. Here, $2E$ denotes the width of the confidence interval (CI). In our experiments, we
take $\alpha = 0.05$ and $E = 0.5$; in other words, we can say with $95\%$ confidence level that we are within
$\pm 0.5$ of the actual mean of the response variable in our experiments.

An important point that needs to be taken into account in our case is that we do not know the actual standard
deviation of the data (response variable). Hence, we statistically estimate $\sigma_{data}$ and insert our estimated
value in the formula in \eqref{sampleSizeFormula}, thereby following an iterative methodology. We initially take $50$
samples since the central limit theorem prescribes that at least $40$ samples should be taken in order for the formula
in \eqref{sampleSizeFormula} to hold true. We calculate the standard deviation of these data with $50$ samples and
find the value for $n$ using \eqref{sampleSizeFormula}. We take an additional set of $n$ samples and calculate the
sample mean of the response variable in these additional $n + 50$ samples. Finally, we conclude that this estimate is
our final estimate for the sample mean of the response variable in our data. In order to verify the validity of our
method, we calculate the standard deviation of these $n + 50$ samples and find another value for $n$ via
\eqref{sampleSizeFormula}, which we call $n_{new}$. If $n_{new} > n + 50$, where $n + 50$ is our actual sample size,
it implies that there is an undesired feature associated with the data, for instance the samples not being independent
of each other. We checked this condition for all of our experiments and did not observe this undesired behavior in any
of them. Therefore, we verified the validity of our estimation procedure for the mean of the response variable.

We present in Figure \ref{fig:factorNM-2D-EN} the results for the $NM$ interaction for the throughput maximizing
scheduler; how the average total network throughput is affected by varying $N$ and $M$. We can see that the total
network throughput decreases as the number of PUs in the CRN cell ($M$) increases. Increasing the number of PUs
implies a decrease in the transmission power of the SUs and consequently their data rate in order not to disturb
these PUs. The increase in the total network throughput as the number of SUs ($N$) increases, on the other hand,
can be attributed to the opportunistic nature of the throughput maximizing scheduler. A frequency and time slot pair
can be regarded as a resource that needs to be assigned to only one SU. Because this resource is most of the time
(except in order to comply with the time slot constraint \eqref{TMSconstr1} and antenna constraint \eqref{TMSconstr3})
assigned to the SU that has the best channel conditions (the maximum $U_{if}$ value), the probability that an SU with
better channel conditions exists increases as the number of SUs increases.

Figure \ref{fig:factorMF-2D-EN} shows the results for the MF interaction for the throughput maximizing scheduler.
Because of the same reasoning as in Figure \ref{fig:factorNM-2D-EN}, the total average network throughput decreases
also here as the number of PUs in the CRN cell increases. Moreover, we can observe that the total network throughput
increases almost linearly with the number of frequencies ($F$) in the CRN cell. This observation is intuitive because
the more frequencies there are in the CRN cell, the more resources there are for the SUs to send their data through.
Furthermore, we can observe that the number of frequencies is a very important factor on the network throughput. In
fact, it is even more important than the number of PUs in the CRN cell, since decreasing the number of frequencies in
the CRN cell has more impact on decreasing the network throughput in comparison to increasing the number of PUs in the
CRN cell.

We show in Figure \ref{fig:factorFa-2D-EN} the results for the $Fa$ interaction for the throughput maximizing scheduler.
We see that increasing the number of antennas of the SUs makes sense only when there is a certain number of frequencies
in the system. For instance, having one antenna has almost the same effect as having multiple antennas when the number
of frequencies is less than $15$. On the other hand, when the number of frequencies is between $15$ and $30$,
increasing the number of antennas from $1$ to $2$ makes a significant difference; nevertheless, having more than two
antennas still does not make sense. We see a similar behavior at $F=30$; i.e., when each SU has $2$ antennas, having
$F>30$ does not increase the total throughput. The reason for this behavior is constraint \eqref{TMSconstr1}, which
ensures that each SU is assigned at least one time slot. In order to comply with this constraint, the scheduler tends
to initially assign some frequency to the first antenna of each SU and then continue assigning frequencies to the other
antennas. Recall that $N = 15$ in Figure \ref{fig:factorFa-2D-EN}. Until the point where $N = F$, the scheduler assigns
the frequencies to the first antennas of each SU. Increasing the number of antennas has a similar effect on total throughput
as increasing the number of SUs. Hence, between $F=N$ and $F = 2N$, the scheduler tends to assign the frequencies to
the second antennas of each SU. Bear in mind that this is the average case behavior of the scheduler. Taking
this scheduler behavior into consideration, we conclude that a centralized network entity responsible for assigning
frequency bands to several CBSs may opt not to assign more frequencies ($F$) than $aN$ to a particular CBS since it
does not bring any additional total throughput advantage to this CRN cell. Likewise, the decision about how many
antennas the SUs should have can be made by considering the number of frequencies ($F$) in the CRN cell, since
having multiple antennas has an additional hardware cost.

Figure \ref{fig:FairSch_BarGraph} displays the behavior of throughput maximizing scheduler (\emph{Thr-Max}), max-min
fair scheduler (\emph{Max-Min}), weighted max-min fair scheduler (\emph{Weighted Max-Min}), and proportionally fair
scheduler (\emph{Prop-Fair}) for $N = 5$ SUs, window size ($\varphi$) = $5$, where the other parameters
($M$, $F$, $V_{p}$, $V_{s}$, and $a$) are set to their middle values, as shown in Table \ref{table:paramNames}.
Target weights for the weighted max-min fair scheduler are
$\eta_{1} = 0.05$, $\eta_{2} = 0.1$, $\eta_{3} = 0.2$, $\eta_{4} = 0.25$, $\eta_{5} = 0.4$. We can see here
that when compared with the throughput maximizing scheduler, max-min fair scheduler achieves more uniform
throughput among the SUs at the expense of a small decrease in the total network throughput, hence achieving
fairness. The throughput values
achieved by the weighted max-min fair scheduler, on the other hand, are in line with their target weights at the
expense of less total network throughput than the other two schedulers. Besides, proportionally fair scheduler also
results in similar throughput values among the SUs; however, the resulting throughput values are not as close to
each other as in the max-min fair scheduler. Throughput values of each SU are closer to the throughput values in
the throughput maximizing scheduler. In other words, proportionally fair scheduler, as expected, exhibits a tradeoff
between users' satisfaction and system revenue.

We run a similar experiment with window size ($\varphi$) = 10 and observe that the total throughput achieved by the
weighted max-min fair scheduler with $\varphi = 10$ is more than the one achieved by the weighted max-min fair
scheduler with $\varphi = 5$. We also observe that this increase in total throughput is accomplished at the expense
of a small deviation from the target weights. In contrast, performances of max-min fair scheduler and proportionally
fair scheduler do not change much with varying window size. Therefore, we have decided to analyze the impact of window size on
these schedulers in more detail. We do not explicitly show the results of the experiment with $\varphi = 10$ in a
separate figure since we show in the sequel the results of this experiment as part of all the results in Figure
\ref{fig:FairSch_VaryingWindowSize} and Table \ref{table:weightedMMFS}.

In Table \ref{table:weightedMMFS}, we show the resulting throughput ratios of each SU for varying window size.
Apparently, the deviation from target weights increases as the window size increases. When we compare the cases
where $\varphi = 1$ and $\varphi = 50$, we can see that the deviation from target weights can become quite large.

Jain's index is a commonly used fairness index in the literature \cite{jain1999throughput}. It is calculated as
$\dfrac{(\sum_{i=1}^{N}\Omega_{i})^{2}}{N \times \sum_{i=1}^{N}\Omega_{i}^{2}}$, where $\Omega_{i}$ is the throughput
of SU $i$ and $N$ is the total number of SUs. Jain's index becomes closer to $1$ as the throughput values of the SUs become
closer to each other. The minimum and maximum values that it can take are $\dfrac{1}{N}$ and $1$, respectively
\cite{jain1999throughput}. We present in Figure \ref{fig:FairSch_5SUs_VaryingWindowSize_TotThr} the total average
throughput values and in Figure \ref{fig:FairSch_5SUs_WindowSize5_Jain} the average Jain fairness index values for
all schedulers with varying window size ($\varphi$). While calculating the average Jain index values, we have taken the arithmetic mean of the Jain index values calculated after every scheduling period. The performance of throughput maximizing scheduler in terms of
both criteria naturally remains constant since window size is not a parameter of this scheduler. The total throughput
and Jain index performances of max-min fair and proportionally fair schedulers, on the other hand, are almost invariant
of the window size. However, the performance of the weighted max-min fair scheduler is significantly affected by the
window size in terms of both performance criteria. The total throughput increases as the window size increases until
it saturates at some point and becomes close to the total throughput of max-min fair scheduler. Hence, we observe that
our windowing mechanism provides our weighted max-min fair scheduler with the flexibility to provide a tradeoff between
maximizing total throughput and adhering to the target throughput proportions.

Throughput and fairness results in Figure \ref{fig:FairSch_VaryingWindowSize} call for weighted max-min fair
schedulers that determine the optimal window size ($\varphi$) achieving a certain throughput or fairness objective.
First, in line with our observations in Figure \ref{fig:FairSch_5SUs_VaryingWindowSize_TotThr}, we can formulate the
following scheduling problem:
\begin{align}
\max\hspace{1mm} &Z' \label{weightedMmfsNew1Obj}\\
s.&t.\notag \\
\sum_{i=1}^{N}\sum_{f=1}^{F}&\sum_{t=1}^{T}\dfrac{U_{if}X_{ift}}{T} \ge \Omega \label{weightedMmfsNew1Constr1}\\
\eqref{weightedMmfsConstr1},& \text{ and } \eqref{weightedMmfsConstr2}\label{weightedMmfsNew1Constr2}
\end{align}
where $\Omega$ is the desired minimum total throughput value, which is fed as an input variable to the optimization
problem in \eqref{weightedMmfsNew1Obj}-\eqref{weightedMmfsNew1Constr2}. Unlike in \eqref{weightedMmfsObj}-\eqref{weightedMmfsConstr2},
the variable $\varphi$ is a decision variable rather than an input variable.

Second, in line with our observations in Figure \ref{fig:FairSch_5SUs_WindowSize5_Jain}, we can formulate the following
scheduling problem:

\begin{align}
&\max\hspace{1mm} Z' \label{weightedMmfsNew2Obj}\\
&s.t.\notag \\
&\dfrac{(\displaystyle\sum_{i=1}^{N}\sum_{f=1}^{F}\sum_{t=1}^{T}\dfrac{U_{if}X_{ift}}{T})^{2}}{N \times \displaystyle\sum_{i=1}^{N}(\sum_{f=1}^{F}\sum_{t=1}^{T}\dfrac{U_{if}X_{ift}}{T})^{2}} \ge J \label{weightedMmfsNew2Constr1}\\
&\eqref{weightedMmfsConstr1}, \text{ and }\eqref{weightedMmfsConstr2}\label{weightedMmfsNew2Constr2}
\end{align}

where $J$ is the desired Jain's fairness index value, which is fed as an input variable to the optimization
problem in \eqref{weightedMmfsNew2Obj}-\eqref{weightedMmfsNew2Constr2}. Again unlike in
\eqref{weightedMmfsObj}-\eqref{weightedMmfsConstr2}, the variable $\varphi$ is a decision variable rather than an input variable. Both problems in
\eqref{weightedMmfsNew1Obj}-\eqref{weightedMmfsNew1Constr2} and \eqref{weightedMmfsNew2Obj}-\eqref{weightedMmfsNew2Constr2}
are nonlinear integer programming problems; therefore, they are computationally difficult. Furthermore,
their nonlinearity prevents the usage of optimization software such as CPLEX. Finding efficient algorithms to address
these problems is a research challenge, which is left as future work.

We present in Figure \ref{fig:FairSch_VaryingNumSUs} the average total throughput and average Jain index values of
throughput maximizing, max-min fair, and proportionally fair schedulers for $\varphi = 5$ and varying number of
secondary users, where the other parameters ($M$, $F$, $V_p$, $V_s$, $a$) are set to their middle values. We do not
present the values for the weighted max-min fair scheduler here because the target weights of each SU have to change
as the number of SUs increases and these different simulation scenarios cannot be compared when there are different
target weights. Furthermore, target weights become very small as the number of SUs increases and it becomes
difficult to truly assess the throughput and fairness performance of the weighted max-min fair scheduler. Recall
that the throughput maximizing scheduler assigns the resources (frequencies and time slots) most of the time to the
SU that has the best channel conditions and the probability that an SU with better channel conditions exists increases
as the number of SUs in the CRN cell increases. This multiuser diversity created by the opportunistic behavior of the
throughput maximizing scheduler increases the variation between the SU throughput values and hence decreases Jain's
fairness index as the number of SUs increases. Max-min fair scheduler exhibits the highest Jain's fairness
index and the lowest total throughput since its objective is to make the throughput values as close to each other
as possible. Proportionally fair scheduler, on the other hand, again displays a tradeoff between the overall system
revenue (total throughput) and individual throughput values. Its total throughput and Jain's index are between the
corresponding values of the throughput maximizing scheduler and the max-min fair scheduler.

Figure \ref{fig:HeuristicMMFS_VaryingNumSUs} shows the average minimum throughput (objective function value) of
CPLEX results (Max-Min) and our heuristic algorithm (Heuristic-MMFS) for $M=20$, $F=15$, $V_{p}=V_{s}=13$ m/s, $a_{i}=3$
$\forall i\in\mathcal{N}$, $\varphi=5$, and varying number of SUs. Our proposed heuristic yields close performance
to the values obtained from CPLEX. Furthermore, we also observe that average minimum throughput decreases as the number
of SUs increases since the amount of resources that each SU can receive decreases as there are more SUs
competing for the same amount of resources. Figure \ref{fig:HeuristicPFS_VaryingNumSUs} shows the sum of logarithms of
SU throughput values (objective function value) of KNITRO results (Prop-Fair) and our heuristic algorithm
(Heuristic-PFS) for the same parameters as in Figure \ref{fig:HeuristicMMFS_VaryingNumSUs}. Results demonstrate that
our proposed heuristic yields very close performance to the values obtained from KNITRO. Objective function value increases
as the number of SUs increases; however, they saturate at some point. This behavior is due to the logarithm in the
objective function value.

Figure \ref{fig:HeuristicMMFS_VaryingNumFreqsN5} shows the average minimum throughput (objective function value) of
CPLEX results (Max-Min) and our heuristic algorithm (Heuristic-MMFS) for $N=5$, $M=20$, $V_{p}=V_{s}=13$ m/s,
$a_{i}=3$ $\forall i\in\mathcal{N}$, $\varphi=5$, and varying number of frequencies. Our proposed heuristic yields close
performance to the values obtained from CPLEX. Furthermore, we also observe that average minimum throughput in general
increases as $F$ increases since there are more resources to be shared by the same number of SUs. Figure \ref{fig:HeuristicWeightedMMFS_VaryingNumFreqsN5} shows the normalized
average minimum throughput (objective function value) of CPLEX results (Weighted Max-Min) and our heuristic algorithm
(Heuristic-Weighted-MMFS) for the same parameters, where normalized average minimum throughput indicates the minimum
throughput obtained after the $U_{if}$ values are updated as $\dfrac{U_{if}}{\eta_{i}}$ in Steps $3$-$5$ of Algorithm
\ref{algHeuristic}. Results demonstrate that our heuristic algorithm achieves close results to the ones obtained
from CPLEX. Figure \ref{fig:HeuristicPFS_VaryingNumFreqsN5} shows the sum of logarithms of
SU throughput values (objective function value) of KNITRO results (Prop-Fair) and our heuristic algorithm
(Heuristic-PFS) for the same parameters. Results demonstrate that
our proposed heuristic yields very close performance to the values obtained from KNITRO. Objective function value
increases as the number of frequencies increases; however, they saturate at some point. This behavior is due to the logarithm in the
objective function value.

Figure \ref{fig:HeuristicVaryingNumFreqsN15} shows the average minimum throughput (objective function value) of
CPLEX results (Max-Min) and our heuristic algorithm (Heuristic-MMFS) for $N=15$, $M=20$, $V_{p}=V_{s}=13$ m/s, $a_{i}=3$
$\forall i\in\mathcal{N}$, $\varphi=5$, and varying number of frequencies. Our proposed heuristic yields close performance
to the values obtained from CPLEX. Figure \ref{fig:HeuristicPFS_VaryingNumFreqsN15} shows the sum of logarithms of
SU throughput values (objective function value) of KNITRO results (Prop-Fair) and our heuristic algorithm
(Heuristic-PFS) for the same parameters. Results demonstrate that
our proposed heuristic yields very close performance to the values obtained from KNITRO.

\section{Conclusion}\label{sec:Conc}
In this paper, we have formulated throughput maximizing, max-min fair, weighted max-min fair, and proportionally
fair scheduling problems for cognitive radio networks managed by a centralized cognitive base station. For the fair scheduler, we proposed an adaptive heuristic algorithm, which acts differently depending on whether max-min fair, weighted max-min fair or proportionally fair scheduling problem is under consideration. The distinguishing feature
of our scheduling model is that it is a a very general model jointly accomplishing numerous goals such as making
the frequency, time slot, and data rate allocation to the secondary (cognitive) users, which possibly have multiple antennas, in a heterogenous
multi-channel and multi-user scenario. Common features of all schedulers are that all of them ensure that the primary
users in the service area of the cognitive base station are not disturbed, no collisions occur among the secondary
users, and reliable communication with the cognitive base station is maintained. One of the distinctive features of our
fair schedulers is that they take into account the throughput performance of the secondary users in the recent past through a windowing mechanism and utilize this information to
make the scheduling decision in the current scheduling period. This mechanism provides our schedulers with the ability
to compensate for the possible temporary throughput losses of the secondary users in the subsequent scheduling periods.
Moreover, our fair schedulers also have the property of providing joint temporal and throughput fairness.

We have assessed the performance of the throughput maximizing
scheduler with respect to various parameters such as the number of secondary users, primary users, frequencies, and
antennas. In addition, we have made a comparative evaluation of all schedulers in terms of average total throughput
and Jain's fairness index for varying window size and varying number of secondary users. We have observed that
increasing the window size does not change the performance of our max-min and proportionally
fair schedulers; however, it increases the average total throughput in the weighted max-min fair scheduler at the
expense of an increase in the deviation from target weights. We have demonstrated
through simulations that our proposed heuristic algorithm for fair schedulers yields close performance to the solutions obtained from optimization softwares CPLEX and KNITRO.

The set of scheduling algorithms we propose in this paper can be readily used by the cognitive base station (CBS) operator in a dynamic and adaptive manner. Hence, our proposed algorithms can be regarded as part of a scheduling model for a CBS operator. Our findings indicate that each of our schedulers work better under
different conditions. Scheduling model at the CBS can dynamically change the executed
scheduler for each scheduling period according to the network conditions and requirements. In particular, our scheduling model can implement the following:

\begin{itemize}
\item If the number of secondary users is small, the model can execute throughput maximizing scheduler because the simulation results indicate that the fairness index of throughput maximizing scheduler is only slightly lower than the ones of fair schedulers. Therefore, it does not make sense to sacrifice from total throughput when there is a small number of secondary users (SU).
\item If the number of SUs is large, the CBS operator gives importance to fairness, there is no priority difference between SUs, and the spectrum and channel conditions between SUs is fairly uniform, then the model can execute max-min fair scheduler (MMFS).
\item If the number of SUs is large, the CBS operator gives importance to fairness, there is no priority difference between SUs, and the spectrum and channel conditions between SUs is highly heterogeneous, then the scheduling model can execute proportionally fair scheduler (PFS). The reason for this decision is that if MMFS is executed, an SU with very bad channel conditions can drive the total throughput of all SUs to very low values. PFS scheduler provides a good tradeoff between maximizing total throughput and achieving fairness.
\item If the number of SUs is large, the CBS operator gives importance to fairness, and there is priority difference between SUs, then the scheduling model can execute weighted MMFS scheduler in order to provide service differentiation capability to the SUs.
\end{itemize}

As a future work, we plan to design approximation algorithms, which have theoretically
provable performance guarantee, to address the fair scheduling problems.




\bibliographystyle{wileyj}

\begin{thebibliography}{10}
\expandafter\ifx\csname url\endcsname\relax
  \def\url#1{\texttt{#1}}\fi
\expandafter\ifx\csname urlprefix\endcsname\relax\def\urlprefix{URL }\fi
\expandafter\ifx\csname href\endcsname\relax
  \def\href#1#2{#2} \def\path#1{#1}\fi

\bibitem{Mitola}
J.~Mitola, {Cognitive Radio: An Integrated Agent Architecture for Software
  Defined Radio}, PhD Thesis, Royal Institute of Technology (KTH), Stockholm,
  Sweden.

\bibitem{akyildiz2006next}
I.~Akyildiz, W.~Lee, M.~Vuran, S.~Mohanty, Next generation/dynamic spectrum
  access/cognitive radio wireless networks: a survey, Computer Networks 50~(13)
  (2006) 2127--2159.

\bibitem{gozupek2009jcn}
D.~G\"{o}z\"{u}pek, F.~Alag\"{o}z, Throughput and delay optimal scheduling in
  cognitive radio networks under interference temperature constraints, Journal
  of Communications and Networks 11~(2) (2009) 147--155.

\bibitem{gozupek2011genetic}
D.~G\"{o}z\"{u}pek, F.~Alag\"{o}z, Genetic algorithm-based scheduling in
  cognitive radio networks under interference temperature constraints,
  International Journal of Communication Systems 24~(2) (2011) 239--257.

\bibitem{fcc2003docket}
E.~FCC, Docket no 03-222 notice of proposed rule making and order (2003).

\bibitem{order2007fcc}
F.~Order, Fcc-07-78a1 (2007).

\bibitem{quan2008optimal}
Z.~Quan, S.~Cui, A.~Sayed, Optimal linear cooperation for spectrum sensing in
  cognitive radio networks, IEEE Journal on Selected Topics in Signal
  Processing 2~(1) (2008) 28--40.

\bibitem{jiang2009optimal}
H.~Jiang, L.~Lai, R.~Fan, H.~Poor, Optimal selection of channel sensing order
  in cognitive radio, IEEE Transactions on Wireless Communications 8~(1) (2009)
  297--307.

\bibitem{liu2001opportunistic}
X.~Liu, E.~Chong, N.~Shroff, Opportunistic transmission scheduling with
  resource-sharing constraints in wireless networks, IEEE Journal on Selected
  Areas in Communications 19~(10) (2001) 2053--2064.

\bibitem{viswanath2002opportunistic}
P.~Viswanath, D.~Tse, R.~Laroia, Opportunistic beamforming using dumb antennas,
  IEEE Transactions on Information Theory 48~(6) (2002) 1277--1294.

\bibitem{liu2003framework}
X.~Liu, E.~Chong, N.~Shroff, A framework for opportunistic scheduling in
  wireless networks, Computer Networks 41~(4) (2003) 451--474.

\bibitem{ye2011ijcs}
J.~Ye, J.X.~Wang, J.W.~Huang, A cross-layer TCP for providing fairness in wireless mesh networks, Wiley's International Journal of Communication Systems, vol.~24, no.~12, pp.1611--1626, 2011.

\bibitem{hou2011ijcs}
T.C.~Hou, C.W.~Hsu, C.S.~Wu, A delay-based transport layer mechanism for fair TCP throughput over 802.11 multihop wireless mesh networks, Wiley's International Journal of Communication Systems, vol.~24, no.~8, pp.1015--1032, 2011.

\bibitem{yang2012study}
F.M.~Yang, W.M.~Chen, T.K.~Cheng, J.L.~Wu, A study of QoS guarantee and fairness based on cross-layer channel state in Worldwide Interoperability for Microwave Access, Wiley's International Journal of Communication Systems, vol.~25, no.~7, pp.926--942, 2012.

\bibitem{huang2001max}
X.~Huang, B.~Bensaou, On max-min fairness and scheduling in wireless ad-hoc
  networks: Analytical framework and implementation, ACM International
  Symposium on Mobile Ad Hoc Networking \& Computing, 2001, pp. 221--231.

\bibitem{liu2003opportunistic}
Y.~Liu, E.~Knightly, Opportunistic fair scheduling over multiple wireless
  channels, IEEE International Conference on Computer Communications (INFOCOM), Vol.~2, 2003, pp. 1106--1115.

\bibitem{nandagopal2000achieving}
T.~Nandagopal, T.~Kim, X.~Gao, V.~Bharghavan, Achieving mac layer fairness in
  wireless packet networks, ACM MOBICOM, 2000, pp. 87--98.

\bibitem{tang2008joint}
J.~Tang, S.~Misra, G.~Xue, Joint spectrum allocation and scheduling for fair
  spectrum sharing in cognitive radio wireless networks, Computer Networks
  52~(11) (2008) 2148--2158.

\bibitem{lee2010spectrum}
W.~Lee, I.~Akyildiz, A spectrum decision framework for cognitive radio
  networks, IEEE Transactions on Mobile Computing (2010) 161--174.

\bibitem{lee2008joint}
W.~Lee, I.~Akyildiz, Joint spectrum and power allocation for inter-cell
  spectrum sharing in cognitive radio networks, IEEE Symposium on New Frontiers in Dynamic Spectrum Access Networks (DySpan), 2008.

\bibitem{nguyen2011effective}
M.~Nguyen, H.~Lee, Effective scheduling in infrastructure-based cognitive radio
  network, IEEE Transactions on Mobile Computing 10~(6) (2011) 853--867.

\bibitem{shi2010maximizing}
Y.~Shi, Y.~Hou, S.~Kompella, H.~Sherali, Maximizing capacity in multihop
  cognitive radio networks under the sinr model, IEEE Transactions on Mobile
  Computing (2010) 954--967.

\bibitem{peng2006utilization}
C.~Peng, H.~Zheng, B.~Zhao, Utilization and fairness in spectrum assignment for
  opportunistic spectrum access, Mobile Networks and Applications 11~(4) (2006)
  555--576.

\bibitem{dianati2007scheduling}
M.~Dianati, X.S.~Shen, K.~Naik, Scheduling with base station diversity and fairness analysis for the downlink of CDMA cellular networks, Wiley Journal on Wireless Communications and Mobile Computing 7~(5) (2007) 569--579.

\bibitem{order2008fcc}
F.~Order, Fcc-08-260 (2008).

\bibitem{wang2009effects}
L.~Wang, A.~Chen, Effects of location awareness on concurrent transmissions for
  cognitive ad hoc networks overlaying infrastructure-based systems, IEEE
  Transactions on Mobile Computing (2009) 577--589.

\bibitem{wang2010resource}
H.~Wang, J.~Ren, T.~Li, Resource allocation with load balancing for cogntive
  radio networks, IEEE Global Communications Conference (GLOBECOM), 2010.

\bibitem{digham2008joint}
F.~Digham, Joint power and channel allocation for cognitive radios, IEEE
  Wireless Communications and Networking Conference (WCNC), 2008, pp.
  882--887.

\bibitem{cover2006elements}
T.~Cover, J.~Thomas, Elements of information theory, Wiley-Interscience, New
  York, 2006.

\bibitem{rajan2004delay}
D.~Rajan, A.~Sabharwal, B.~Aazhang, Delay-bounded packet scheduling of bursty
  traffic over wireless channels, IEEE Transactions on Information Theory
  50~(1) (2004) 125--144.

\bibitem{mahmoud2009ofdm}
H.~Mahmoud, T.~Yucek, H.~Arslan, OFDM for cognitive radio: Merits and challenges, IEEE Wireless Communications 6~(2) (2009) 6--15.

\bibitem{li2008proportional}
L.~Li, M.~Pal, Y.~Yang, Proportional fairness in multi-rate wireless lans,
  IEEE International Conference on Computer Communications (INFOCOM), 2008, pp. 1004--1012.

\bibitem{gozupek2011graph}
D.~G\"{o}z\"{u}pek, M.~Shalom, F.~Alag\"{o}z, A graph theoretic approach to
  scheduling in cognitive radio networks,  \emph{submitted to IEEE/ACM
  Transactions on Networking}, 2012. [Online]. Available:
  \url{http://www.cmpe.boun.edu.tr/\urltilda gozupek/gozupek-GraphCogSch-2012.pdf}

\bibitem{ileri2005demand}
O.~Ileri, D.~Samardzija, N.B.~Mandayam, Demand responsive pricing and competitive spectrum allocation via a spectrum server, \emph{IEEE DySPAN}, pp.194-202, 2005.

\bibitem{CPLEX}
CPLEX, [Online]. Available:
  \url{http://www.ilog.com/products/cplex}

\bibitem{KNITRO}
{KNITRO Solver, Ziena Optimization Inc.}, [Online]. Available:\url{http://www.ziena.com/knitro.htm}

\bibitem{quesada1992lp}
I.~Quesada, I.~Grossmann, An lp/nlp based branch and bound algorithm for convex
  minlp optimization problems, Computers \& Chemical Engineering 16~(10-11)
  (1992) 937--947.

\bibitem{Gozupek2010PIMRC}
D.~G\"{o}z\"{u}pek, F.~Alag\"{o}z, {An interference aware throughput maximizing
  scheduler for centralized cognitive radio networks}, IEEE International
  Symposium on Personal, Indoor and Mobile Radio Communications (PIMRC),
  2010, pp. 1--6.

\bibitem{montgomery2007statistics}
D.~Montgomery, G.~Runger, Applied statistics and probability for engineers,
  Wiley India Pvt.Ltd., 2007.

\bibitem{jain1999throughput}
R.~Jain, A.~Durresi, G.~Babic, Throughput fairness index: an explanation, in:
  ATM Forum Contribution, Vol.~45, 1999.

\bibitem{rodrigues2009adaptive}
E.B.~Rodrigues, F.~Casadevall, {Adaptive radio resource allocation framework for multi-user OFDM}, IEEE Vehicular Technology Conference (VTC),
  2009, pp. 1--6.

\bibitem{zhang2008opportunistic}
Z.~Zhang, Y.~He, E.K.P.~Chong, Opportunistic scheduling for OFDM systems with fairness constraints, EURASIP Journal on Wireless Communications and Networking (2008).

\bibitem{wang2012resource}
S.~Wang, F.~Huang, M.~Yuan, S.~Du, Resource allocation for multiuser cognitive OFDM networks with proportional rate constraints, Wiley's International Journal of Communication Systems, vol.~25, no.~2, pp.254--269, 2012.

\bibitem{derya2012wcmc}
D.~Cavdar, H.B.~Yilmaz, T.~Tugcu, F.~Alagöz, Analytical modeling and resource planning for cognitive radio systems, Wiley Journal on Wireless Communications and Mobile Computing, vol.~12, no.~3, pp.277--292, 2012.

\end{thebibliography}

\newpage

\begin{figure}[h]
\begin{center}
\includegraphics[scale=0.55]{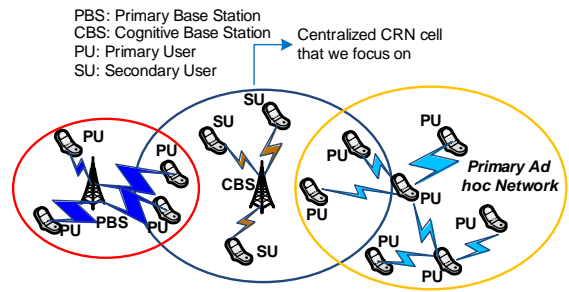}
\caption{The considered centralized CRN architecture.} \label{fig:802_22_NetworkDiagram}
\end{center}
\end{figure}
\FloatBarrier

\begin{figure}[h]
\begin{center}
\includegraphics[scale=0.6]{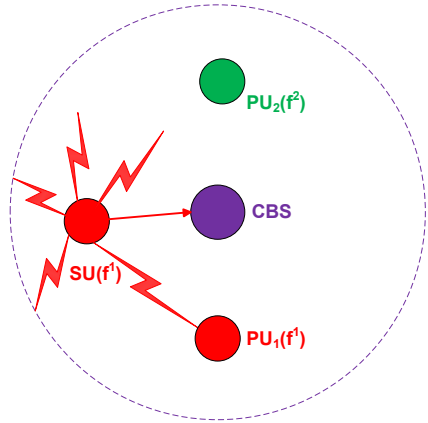}
\caption{Framework for our cognitive scheduling mechanism.} \label{fig:OurSchedulingMechanism}
\end{center}
\end{figure}
\FloatBarrier

\begin{figure}[h]
\begin{center}
\includegraphics[scale=0.6]{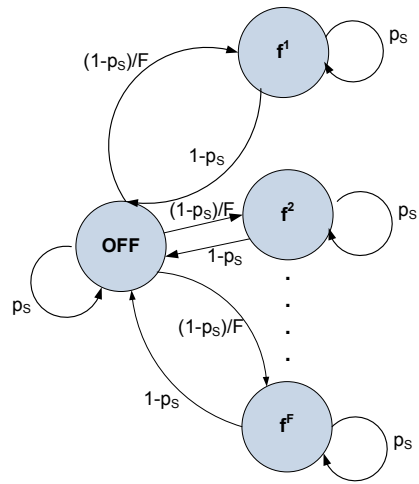}
\caption{PU spectrum occupancy model.} \label{fig:PUSpectrumOccupancy}
\end{center}
\end{figure}
\FloatBarrier

\begin{table}
  \centering
\begin{center}
\small
  \begin{tabular}{ | c | c | c | c | }
    \hline
    Parameter Name & Low Value & High Value & Middle Value \\ \hline
    N (Number of SUs) & 5 & 30 & 15\\ \hline
    M (Number of PUs) & 5 & 40 & 20\\ \hline
    F (Number of frequencies) & 3 & 30 & 15 \\ \hline
    $V_{p}$ (Velocity of PUs) & 1 m/s & 25 m/s & 13 m/s \\ \hline
     $V_{s}$ (Velocity of SUs) & 1 m/s & 25 m/s & 13 m/s \\ \hline
    $a$ (Number of antennas of SUs) & 1 & 5 & 3 \\ \hline
  \end{tabular}
\end{center}
  \caption{Parameter names and low/high/middle values}\label{table:paramNames}
\end{table}
\FloatBarrier

\begin{figure}[h]
\begin{center}
\includegraphics[scale=0.55]{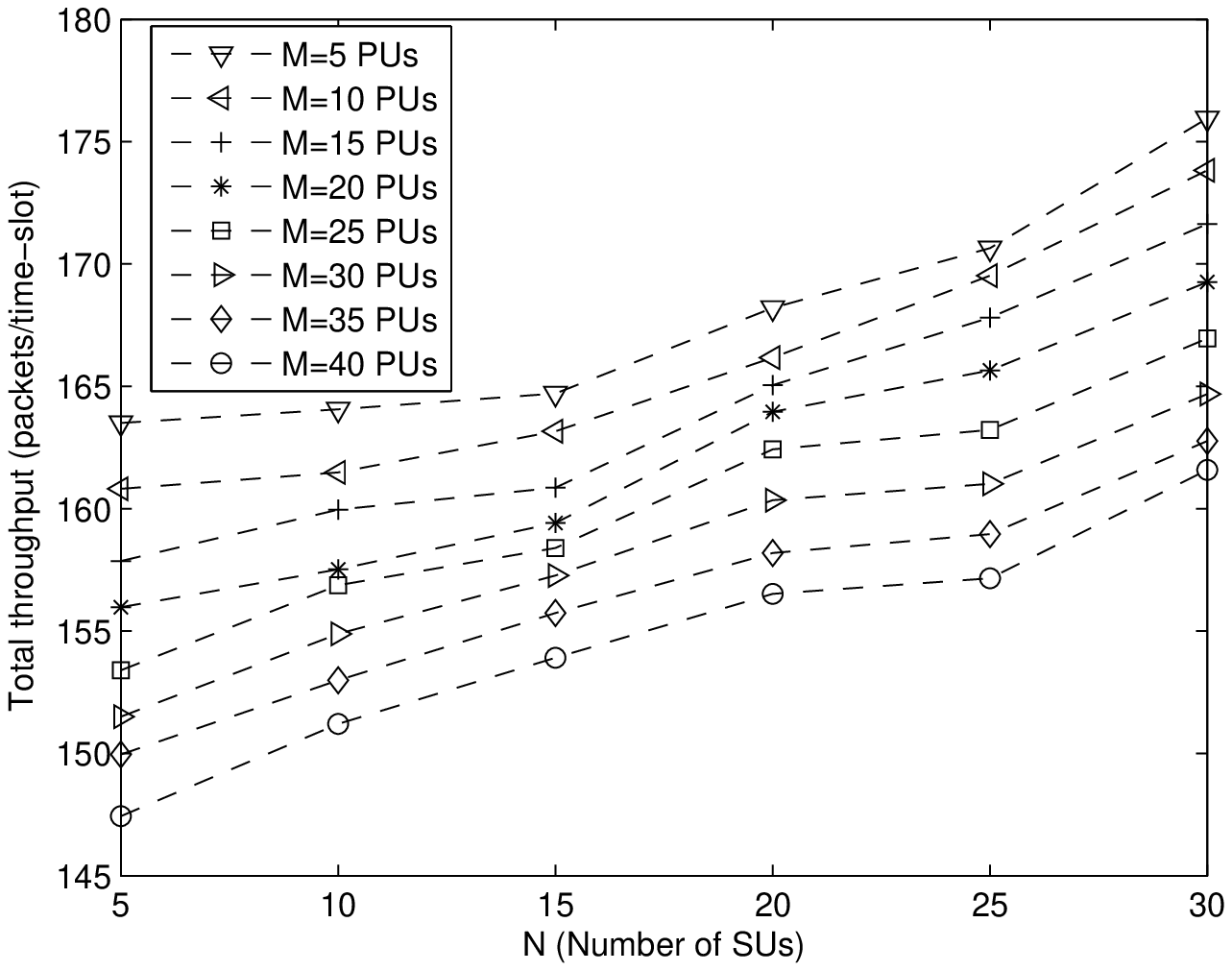}
\caption{Average total network throughput for throughput maximizing scheduler for varying $N$ and $M$.} \label{fig:factorNM-2D-EN}
\end{center}
\end{figure}
\FloatBarrier

\begin{figure}[h]
\begin{center}
\includegraphics[scale=0.55]{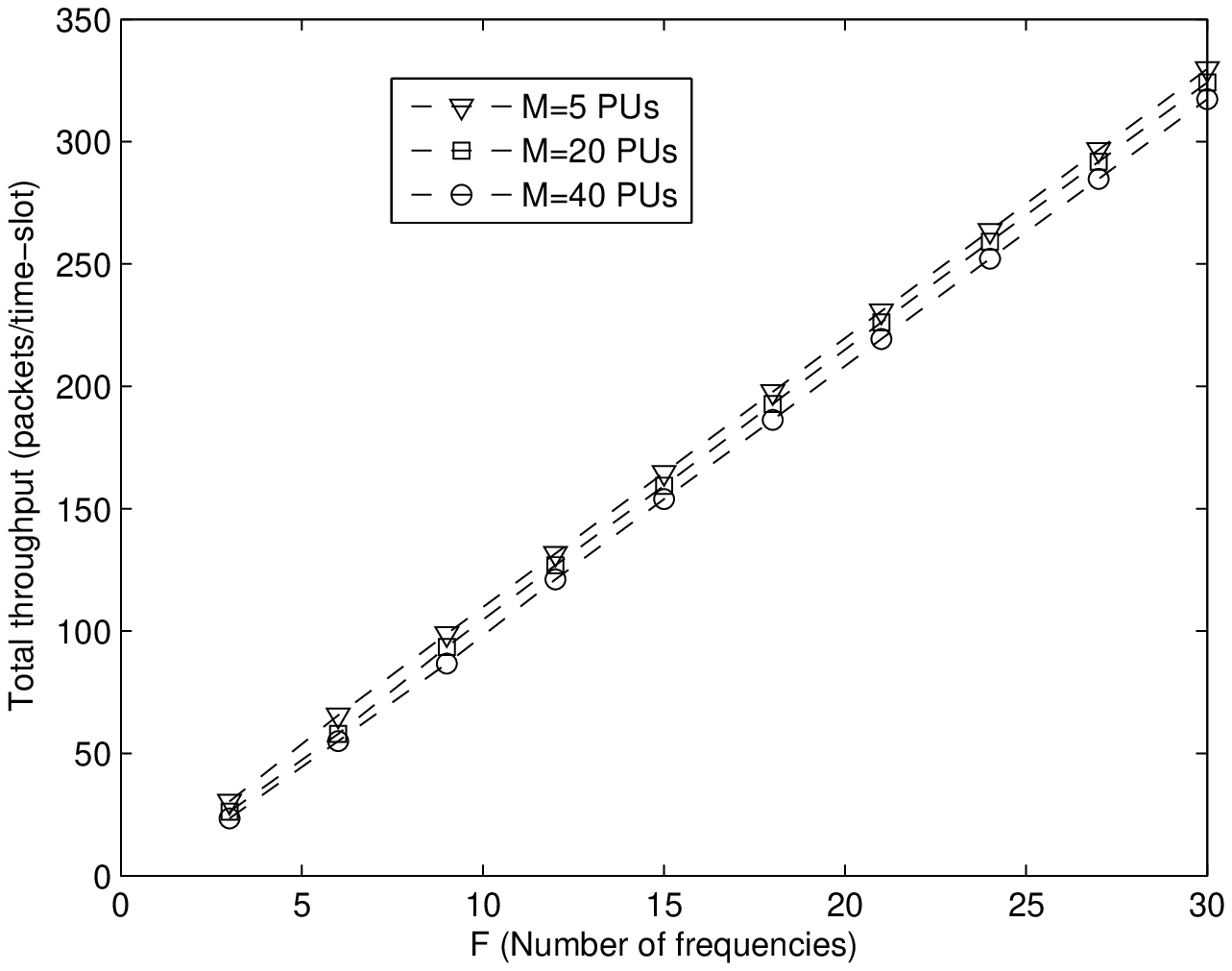}
\caption{Average total network throughput for throughput maximizing scheduler for varying $N$ and $F$.} \label{fig:factorMF-2D-EN}
\end{center}
\end{figure}
\FloatBarrier

\newpage
\begin{figure}[!htb]
\begin{center}
\includegraphics[scale=0.55]{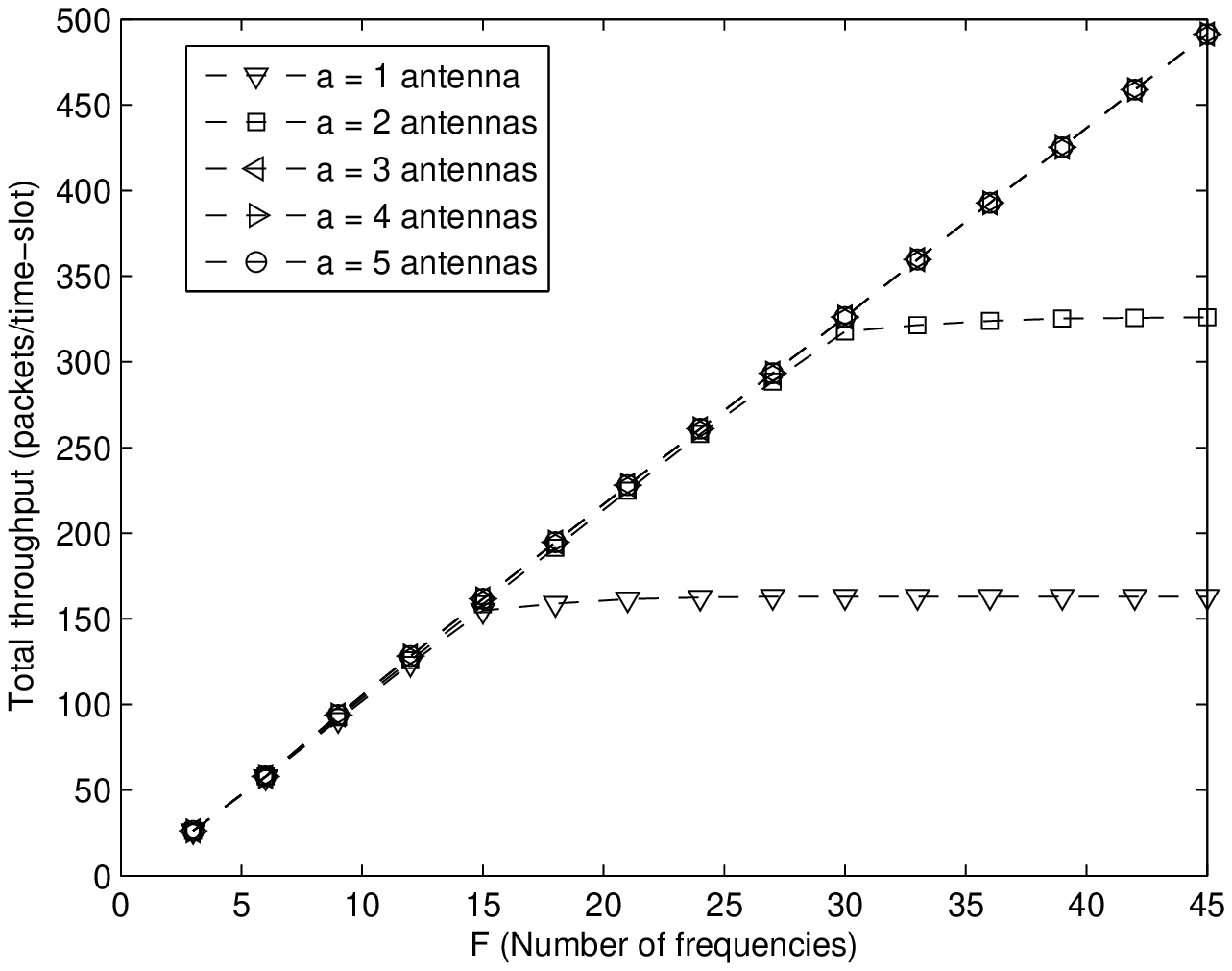}
\caption{Average total network throughput for throughput maximizing scheduler for varying $F$ and $a$.} \label{fig:factorFa-2D-EN}
\end{center}
\end{figure}
\FloatBarrier

\begin{figure*}[!htb]
\centering
\subfigure[Average throughput of all the SUs.]{
\includegraphics[scale=0.5]{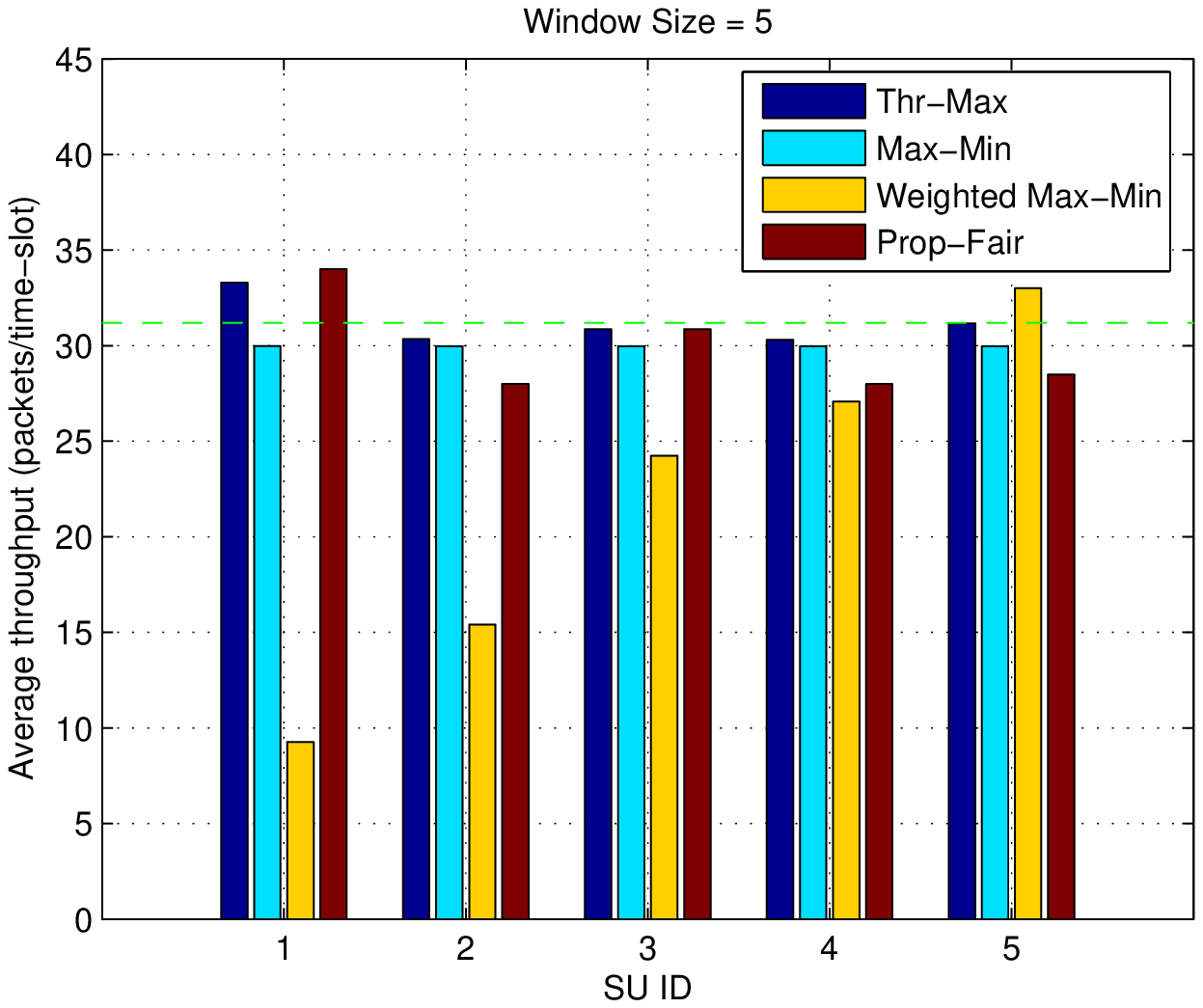}
\label{fig:FairSch_5SUs_WindowSize5}
}
\subfigure[Average total throughput.]{
\includegraphics[scale=0.5]{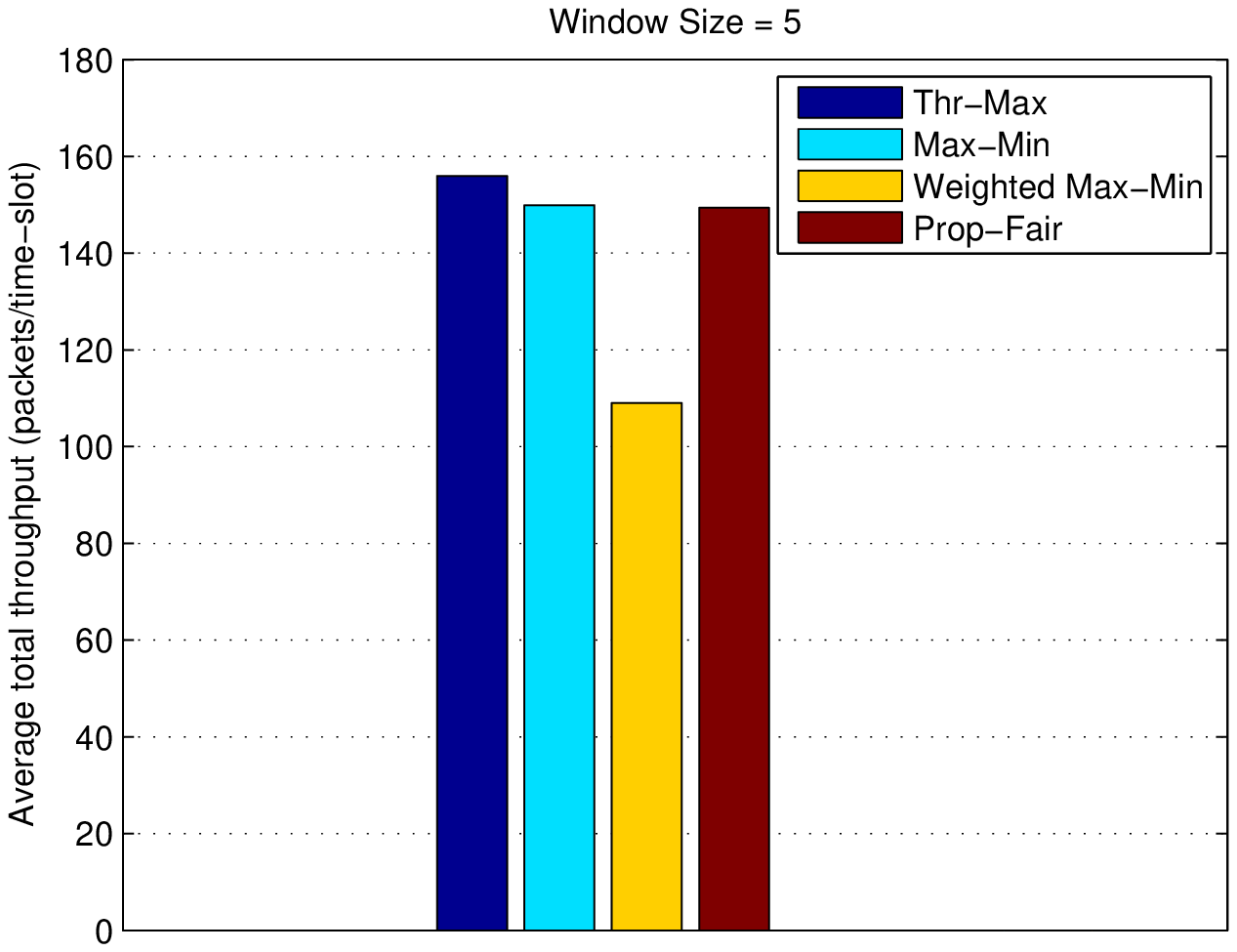}
\label{fig:FairSch_5SUs_WindowSize5_TotalThroughput}
}
\caption[Optional caption for list of figures]{All schedulers for $N = 5$, window size ($\varphi$) = $5$, and target weights: $\eta_{1} = 0.05$, $\eta_{2} = 0.1$, $\eta_{3} = 0.2$,
$\eta_{4} = 0.25$, $\eta_{5} = 0.4$. }
\label{fig:FairSch_BarGraph}
\end{figure*}
\FloatBarrier

\begin{table*}[!htb]
  \centering
\footnotesize\addtolength{\tabcolsep}{-1pt}
\begin{center}
  \begin{tabular}{ | c | c | c | c | c | c | c | c | c | c | c | c | c | }
    \hline
    SU index & $\varphi = 1$ & $\varphi = 5$ & $\varphi = 10$ & $\varphi = 15$ & $\varphi = 20$ & $\varphi = 25$ & $\varphi = 30$ & $\varphi = 35$ & $\varphi = 40$ & $\varphi = 45$ & $\varphi = 50$ & Target \\
 & & & & & & & & & & & & Weight ($\eta$)\\ \hline
   SU-1 & 0.057 & 0.085 & 0.108 & 0.121 & 0.133 & 0.142 & 0.148 & 0.153 & 0.157 & 0.16 & 0.163 & 0.05\\ \hline
   SU-2 & 0.105 & 0.141 & 0.162 & 0.174 & 0.179 & 0.184 & 0.186 & 0.188 & 0.189 & 0.191 & 0.192 & 0.1\\ \hline
   SU-3 & 0.195 & 0.222 & 0.221 & 0.219 & 0.217 & 0.215 & 0.213 & 0.212 & 0.211 & 0.21 & 0.21 & 0.2\\ \hline
   SU-4 & 0.249 & 0.248 & 0.238 & 0.231 & 0.226 & 0.222 & 0.22 & 0.218 & 0.216 & 0.215 & 0.213 & 0.25\\ \hline
   SU-5 & 0.391 & 0.302 & 0.268 & 0.252 & 0.242 & 0.235 & 0.23 & 0.227 & 0.224 & 0.222 & 0.22 & 0.4\\ \hline
  \end{tabular}
\end{center}
  \caption{Achieved and target throughput ratios for weighted max-min fair scheduler}\label{table:weightedMMFS}
\end{table*}
\FloatBarrier

\begin{figure*}[!htbp]
\centering
\subfigure[Average total throughput.]{
\includegraphics[scale=0.5]{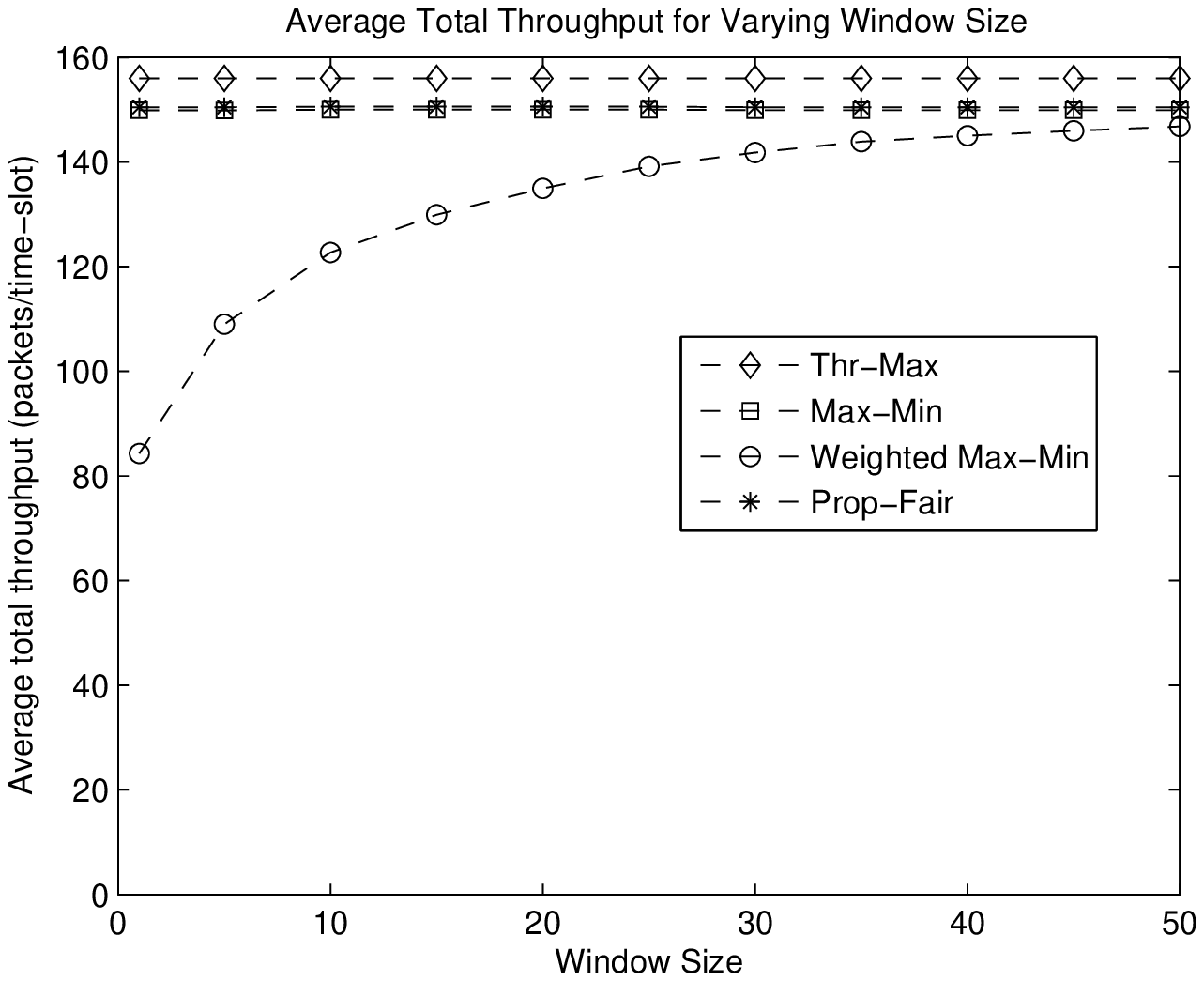}
\label{fig:FairSch_5SUs_VaryingWindowSize_TotThr}
}
\subfigure[Average Jain fairness index.]{
\includegraphics[scale=0.5]{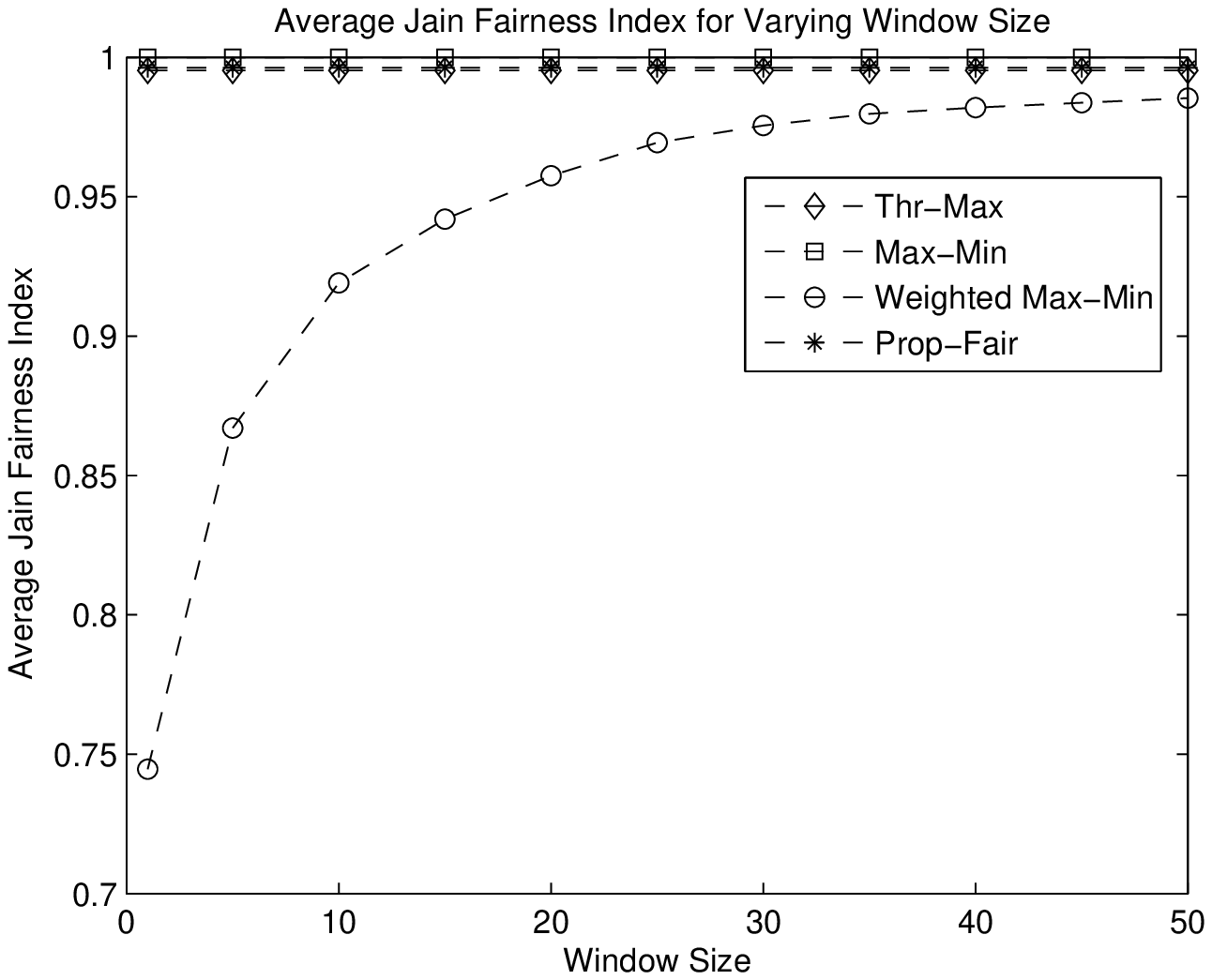}
\label{fig:FairSch_5SUs_WindowSize5_Jain}
}
\caption[Optional caption for list of figures]{All schedulers for varying window size ($\varphi$), $N=5$, and target weights: $\eta_{1} = 0.05$, $\eta_{2} = 0.1$, $\eta_{3} = 0.2$, $\eta_{4} = 0.25$, $\eta_{5} = 0.4$. }
\label{fig:FairSch_VaryingWindowSize}
\end{figure*}
\FloatBarrier

\begin{figure*}[!htbp]
\centering
\subfigure[Average total throughput.]{
\includegraphics[scale=0.5]{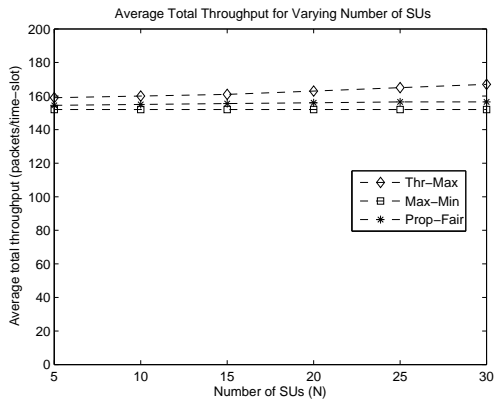}
\label{fig:FairSch_VaryingNumSUs_TotThr}
}
\subfigure[Average Jain fairness index.]{
\includegraphics[scale=0.5]{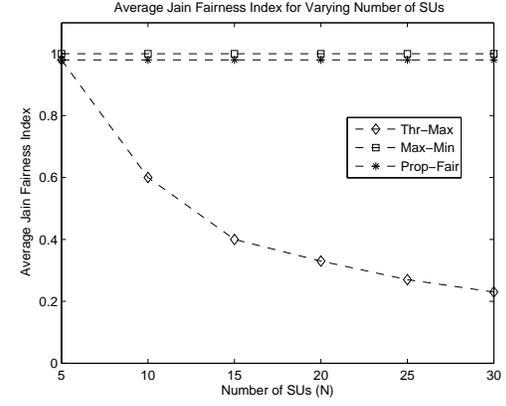}
\label{fig:FairSch_VaryingNumSUs_Jain}
}
\caption[Optional caption for list of figures]{Throughput maximizing, max-min fair, and proportionally fair schedulers for varying number of SUs, $F=15$, and $\varphi = 5$. }
\label{fig:FairSch_VaryingNumSUs}
\end{figure*}
\FloatBarrier

\begin{figure*}[!htbp]
\centering
\subfigure[Performance of max-min fair scheduling]{
\includegraphics[scale=0.5]{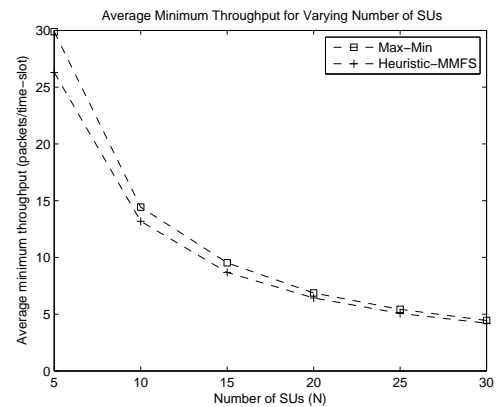}
\label{fig:HeuristicMMFS_VaryingNumSUs}
}
\subfigure[Performance of proportionally fair scheduling]{
\includegraphics[scale=0.5]{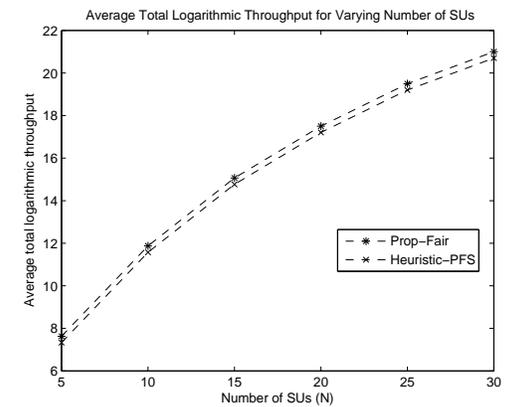}
\label{fig:HeuristicPFS_VaryingNumSUs}
}
\caption[Optional caption for list of figures]{Performance of our heuristic algorithm for varying number of SUs}
\label{fig:HeuristicVaryingNumSUs}
\end{figure*}
\FloatBarrier

\begin{figure*}[!htbp]
\centering
\subfigure[Performance of max-min fair scheduling]{
\includegraphics[scale=0.5]{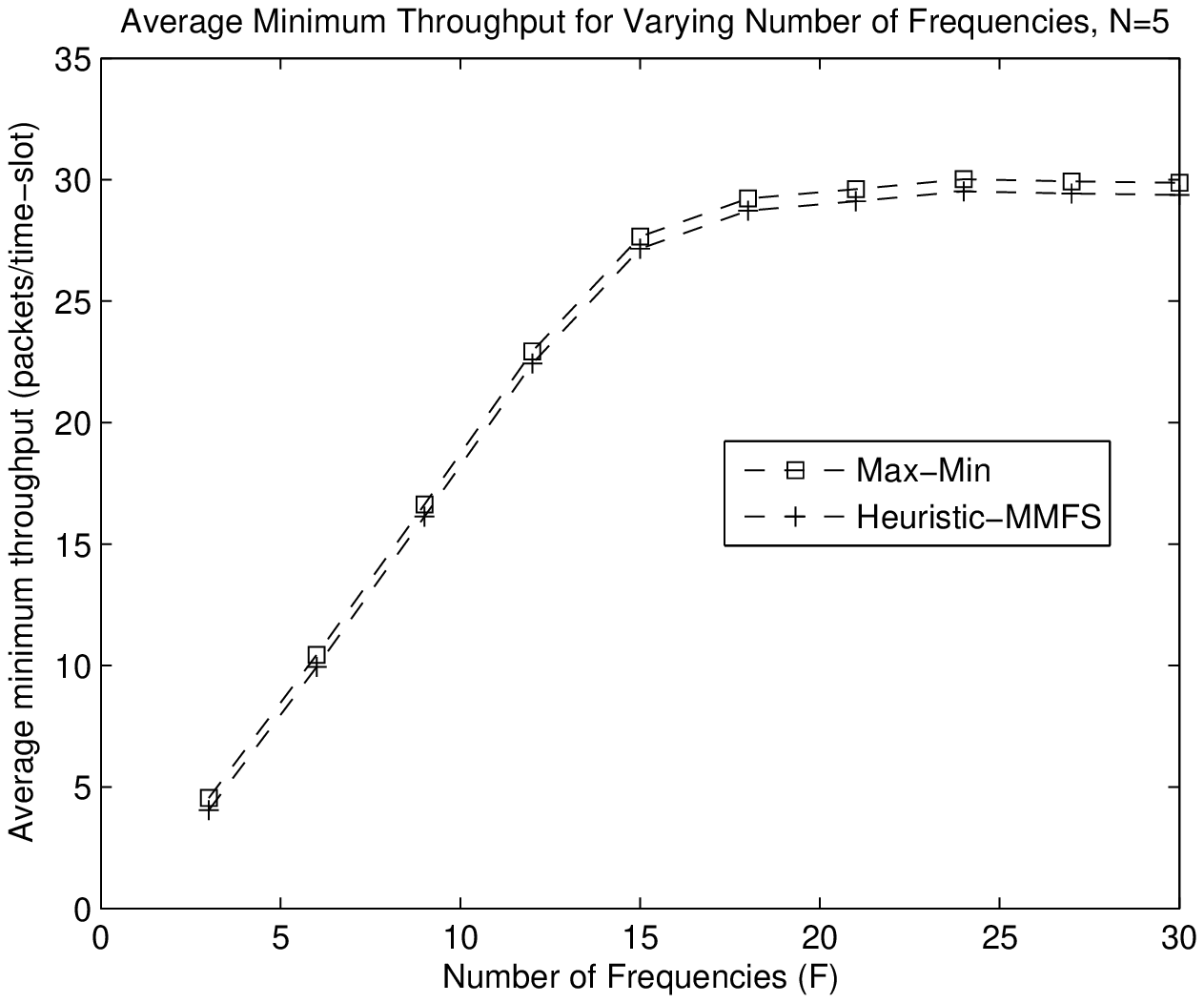}
\label{fig:HeuristicMMFS_VaryingNumFreqsN5}
}
\subfigure[Performance of weighted max-min fair scheduling]{
\includegraphics[scale=0.5]{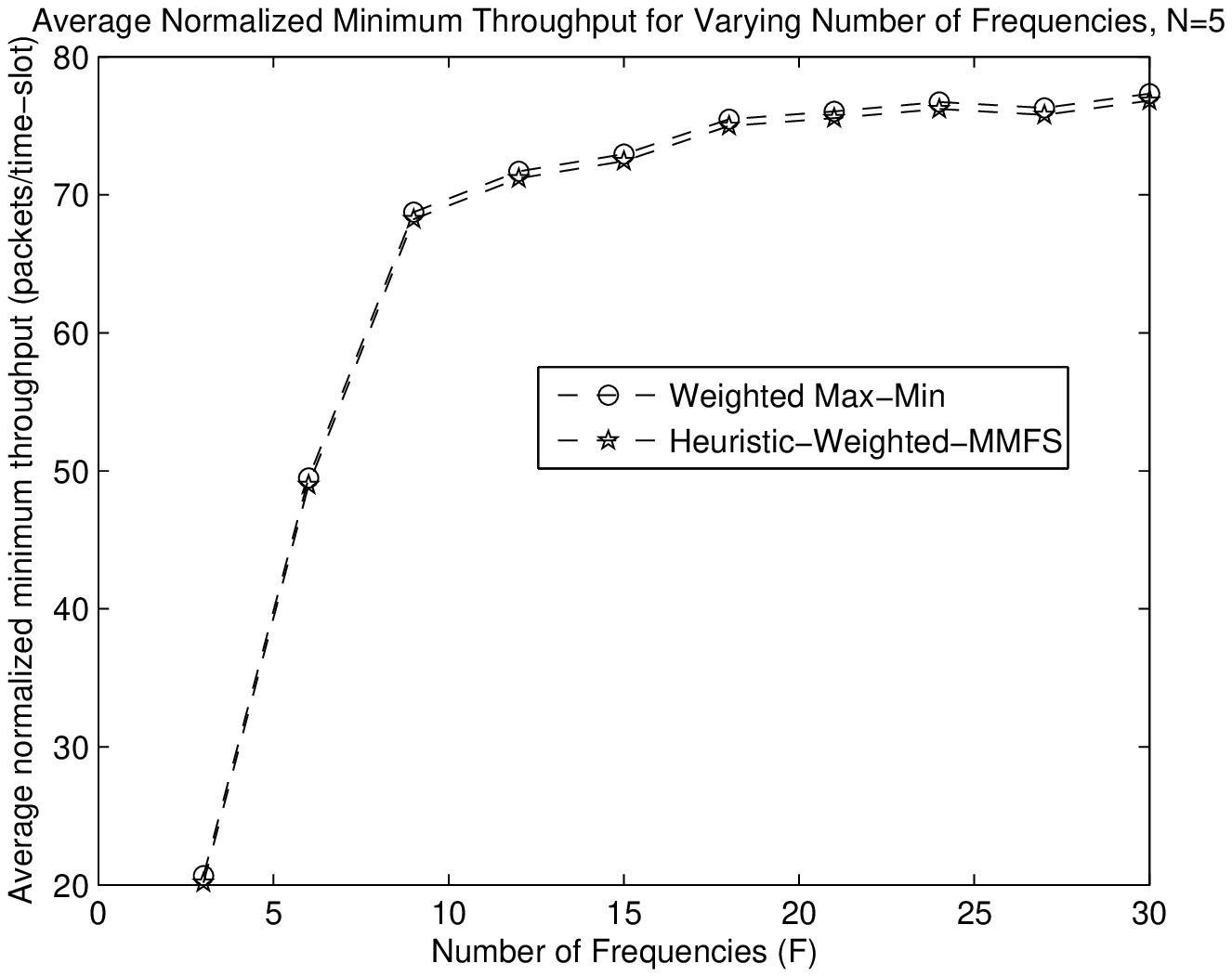}
\label{fig:HeuristicWeightedMMFS_VaryingNumFreqsN5}
}
\subfigure[Performance of proportionally fair scheduling]{
\includegraphics[scale=0.5]{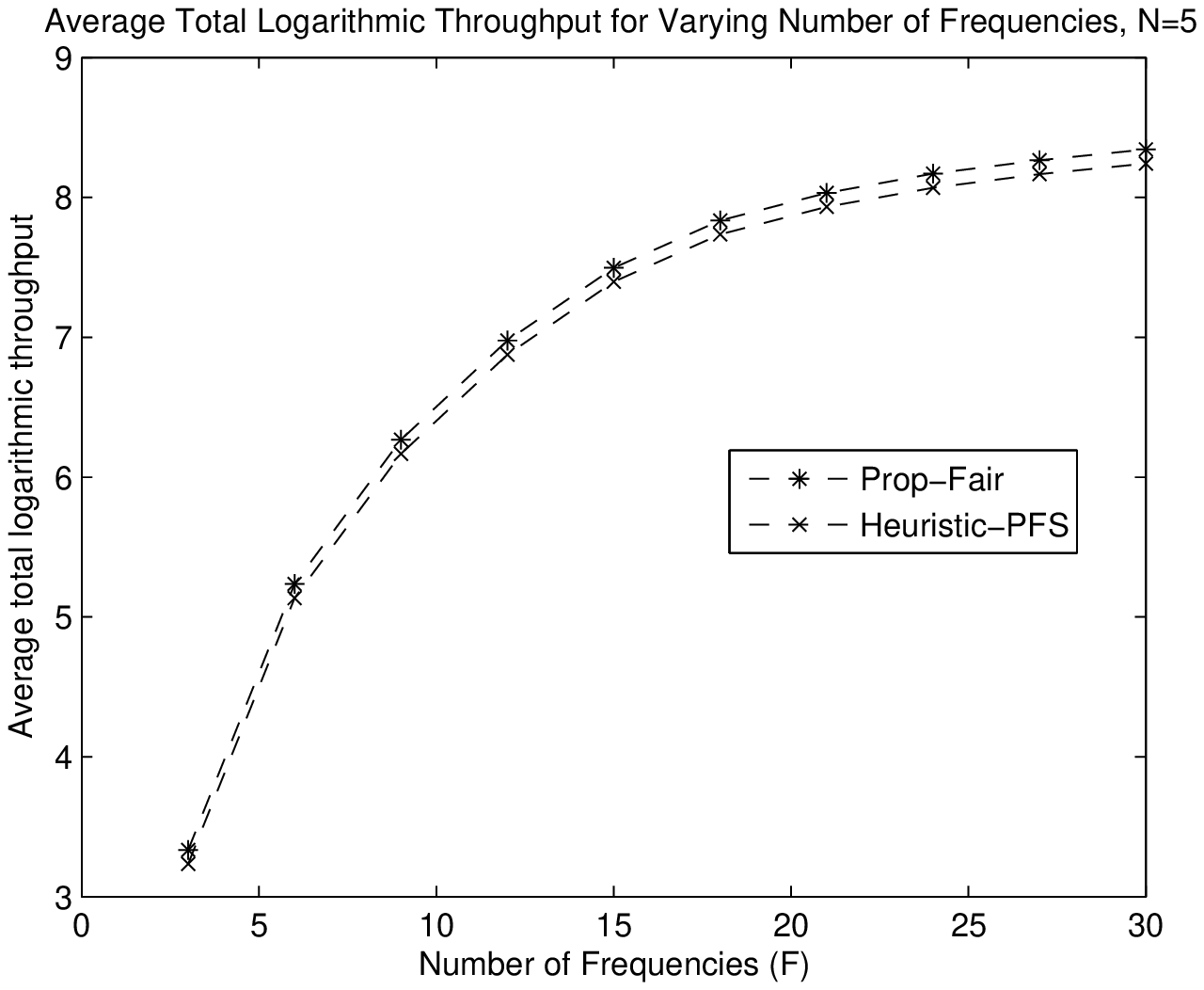}
\label{fig:HeuristicPFS_VaryingNumFreqsN5}
}
\caption[Optional caption for list of figures]{Performance of our heuristic algorithm for N=5 and varying number of frequencies}
\label{fig:HeuristicVaryingNumFreqsN5}
\end{figure*}
\FloatBarrier

\begin{figure*}[h]
\centering
\subfigure[Performance of max-min fair scheduling]{
\includegraphics[scale=0.5]{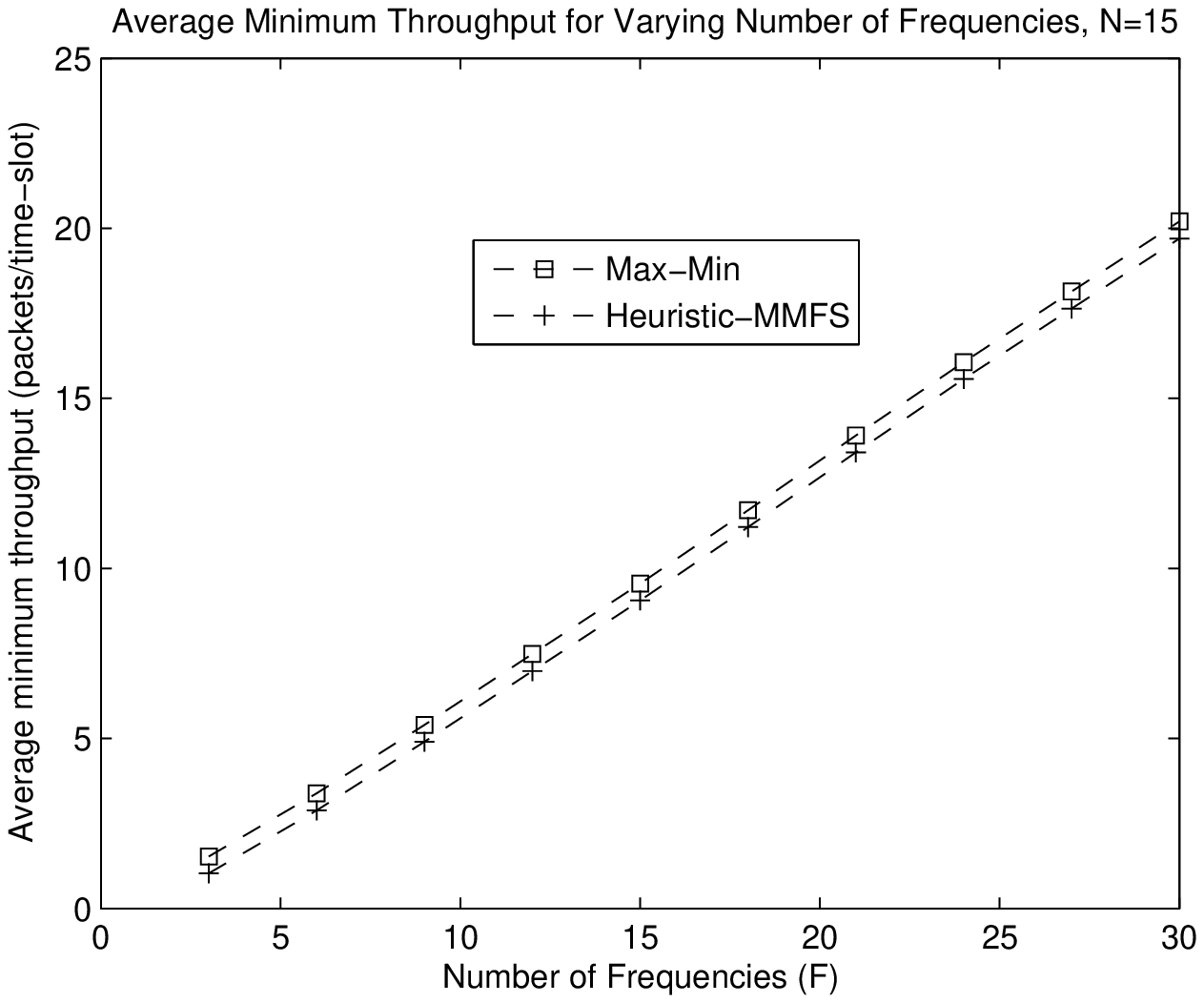}
\label{fig:HeuristicMMFS_VaryingNumFreqsN15}
}
\subfigure[Performance of proportionally fair scheduling]{
\includegraphics[scale=0.5]{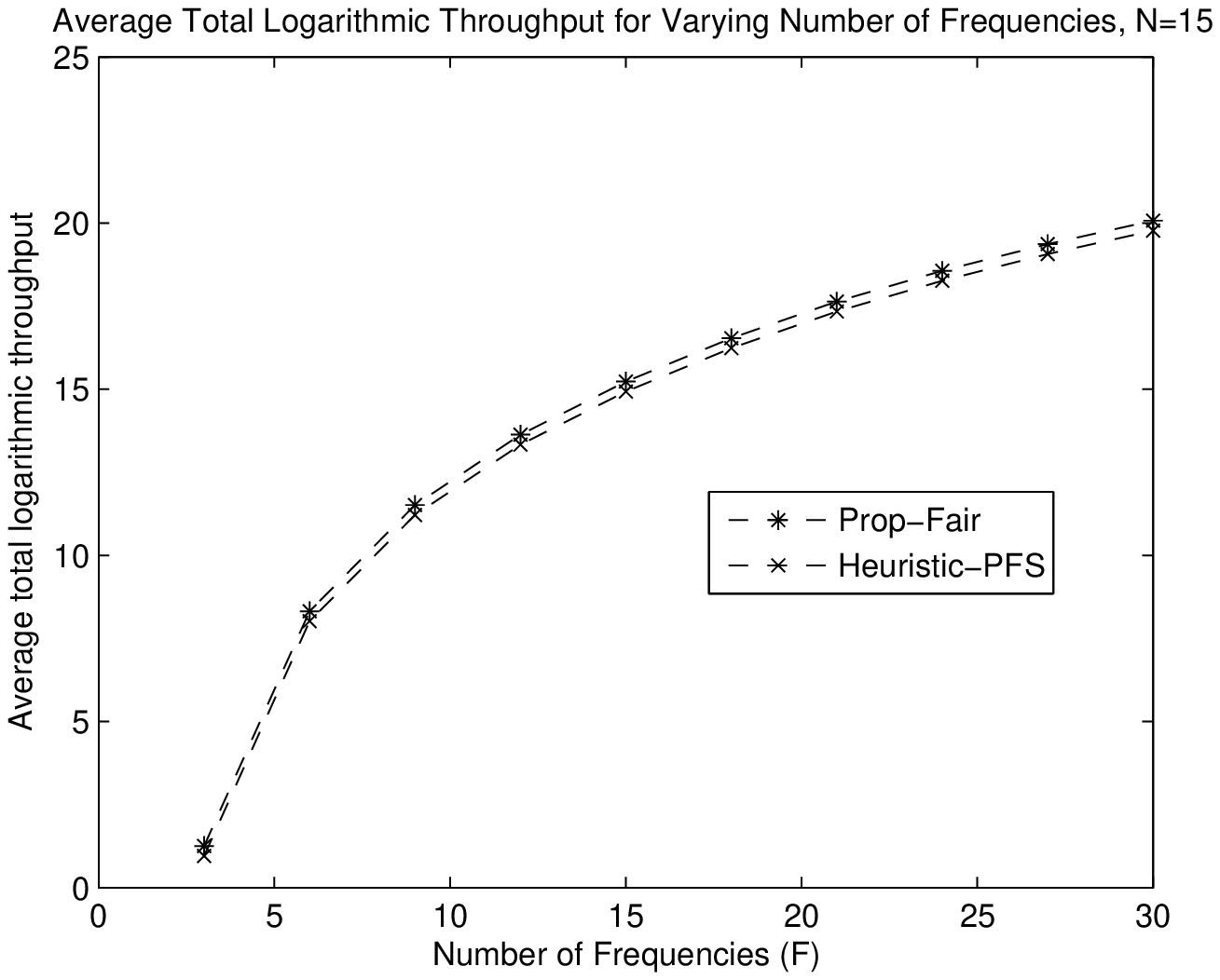}
\label{fig:HeuristicPFS_VaryingNumFreqsN15}
}
\caption[Optional caption for list of figures]{Performance of our heuristic algorithm for N=15 and varying number of frequencies}
\label{fig:HeuristicVaryingNumFreqsN15}
\end{figure*}
\FloatBarrier

\end{document}